\let\oldtocsection=\tocsection
\let\oldtocsubsection=\tocsubsection
\let\oldtocsubsubsection=\tocsubsubsection
\renewcommand{\tocsection}[2]{\hspace{0em}\oldtocsection{#1}{#2}}
\renewcommand{\tocsubsection}[2]{\hspace{1em}\oldtocsubsection{#1}{#2}}
\renewcommand{\tocsubsubsection}[2]{\hspace{2em}\oldtocsubsubsection{#1}{#2}}
\numberwithin{equation}{section}
\theoremstyle{plain}
\newtheorem{theorem}{Theorem}[section]
\newtheorem{proposition}[theorem]{Proposition}
\newtheorem{conjecture}[theorem]{Conjecture}
\theoremstyle{definition}
\newtheorem{example}[theorem]{Example}
\newtheorem{remark}[theorem]{Remark}
\newcommand{\la}{\lambda}
\newcommand{\La}{\Lambda}
\newcommand{\arrow}{\rightarrow}
\newcommand{\map}{\mapsto}
\newcommand{\diff}[2]{\frac{d#1}{d#2}}
\newcommand{\Diff}[3]{\left . \frac{d}{d#2}#1\right |_{#3}}
\newcommand{\alg}{\mathfrak{g}}
\newcommand{\Alg}{\mathbb{A}}
\newcommand{\F}[1]{\mathcal{F}\bra{#1}}
\newcommand{\smf}[1]{\mathcal{C}^\infty(#1)}
\newcommand{\Si}{\mathbb{S}^1}
\newcommand{\Cm}{\mathbb{C}}
\newcommand{\Km}{\mathbb{K}}
\newcommand{\Tm}{\mathbb{T}}
\newcommand{\T}{{\rm T}}
\newcommand{\Fun}{\mathscr{F}}
\renewcommand{\F}{\mathcal{F}}
\renewcommand{\H}{\mathcal{H}}
\renewcommand{\P}{\mathcal{P}}
\newcommand{\Liea}{\mathscr{L}}
\newcommand{\g}{\tilde{g}}
\newcommand{\h}{\tilde{h}}
\newcommand{\f}{\tilde{f}}
\newcommand{\q}{\bm{q}}
\newcommand{\ff}{\bm{f}}
\renewcommand{\gg}{\bm{g}}
\newcommand{\uf}{\bm{u}}
\newcommand{\bg}{\boldsymbol{\gamma}}
\newcommand{\pr}{\partial}
\newcommand{\me}{\geqslant}
\newcommand{\les}{\leqslant}
\newcommand{\bra}[1]{\! \left (#1\right )}
\newcommand{\brac}[1]{\! \left [#1\right ]}
\newcommand{\bc}[1]{[\![#1]\!]}
\newcommand{\pobr}[1]{\left \{#1\right \}}
\newcommand{\ddual}[1]{\bigl \langle #1 \bigr \rangle}
\newcommand{\dual}[1]{\langle #1 \rangle}
\newcommand{\var}[2]{\frac{\delta #1}{\delta #2}}
\newcommand{\pd}[2]{\frac{\partial #1}{\partial #2}}
\newcommand{\pmatrx}[1]{\begin{pmatrix} #1 \end{pmatrix}}
\DeclareMathOperator{\const}{const}
\DeclareMathOperator{\ad}{ad}
\DeclareMathOperator{\diag}{diag}
\newcommand{\abs}[1]{\left |#1\right |}
\newcommand{\Vect}[1]{{\rm Vect}(#1)}
\newcommand{\eqq}[1]{\begin{align*} #1 \end{align*}}
\newcommand{\eq}[1]{\begin{align} #1 \end{align}}
\newcommand{\eps}{\varepsilon}
\setlist{nolistsep}
\begin{document}

\title[Novikov algebras and a classification of multicomponent Camassa-Holm equations]{Novikov algebras and a classification of multicomponent Camassa-Holm equations}

\author{Ian A.B. Strachan and B\l a\.zej M. Szablikowski}

%\address{}
\address{School of Mathematics and Statistics\\ University of Glasgow\\Glasgow G12 8QQ\\ U.K.}
\email{ian.strachan@glasgow.ac.uk\bigskip}
%\address{}
\address{Faculty of Physics, Adam Mickiewicz University, Umultowska 85, 61-614 Pozna\'n, Poland}
\email{bszablik@amu.edu.pl}

\keywords{Novikov algebras, bi-Hamiltonian structures, Camassa-Holm equations} \subjclass{}

\date{}

\begin{abstract}
A class of multi-component integrable systems  associated to Novikov algebras, which interpolate between KdV and Camassa-Holm type equations, is obtained. The construction is based on the classification of low-dimensional Novikov algebras by Bai and Meng. These multi-component bi-Hamiltonian systems obtained by this construction may be interpreted as Euler equations on the centrally extended Lie algebras associated to the Novikov algebras. The related bilinear forms generating cocycles
of first, second and third order are classified. Several examples, including known integrable equations, are presented.
\end{abstract}

\maketitle

{\small \parskip0.1ex \tableofcontents}
%\tableofcontents

\newpage
\section{Introduction}

The Camassa-Holm equation \cite{CH}
\begin{equation}
v_t - v_{xxt} = \alpha v_{x} -3 v v_x + 2 v_x v_{xx} + v v_{xxx} \,,
\label{CHeqn}
\end{equation}
an example of a $(1+1)$-dimensional integrable system, has many intriguing mathematical properties, amongst them being:
\begin{itemize}
\item[$\bullet$] the existence of multi \lq peakon\rq~solutions;
\item[$\bullet$] the non-existence of a $\tau$ function or functions.
\end{itemize}
Its other properties have more in common with equations such as the KdV equation: the existence of a Lax pair, solvability via the inverse scattering transform,
and a bi-Hamiltonian structure. In fact, the bi-Hamiltonian structure, and hence the Camassa-Holm itself, may be found by exploiting the tri-Hamiltonian structure
of the KdV hierarchy \cite{FF,F,OR}. The component parts of the bi-Hamiltonian pair for the KdV equation, namely
\begin{equation}
\frac{d~}{dx}\,, \qquad \frac{d^3~}{dx^3}\,, \qquad u \frac{d~}{dx} + \frac{1}{2} u_x
\label{kdvparts}
\end{equation}
are pair-wise compatible and hence may be recombined to form the bi-Hamiltonian structures
\begin{eqnarray*}
\displaystyle{\mathcal{P}_1} & = & \displaystyle{\frac{d~}{dx} - \frac{d^3~}{dx^3}\,,}\\
\displaystyle{\mathcal{P}_2} & = & u \displaystyle{\frac{d~}{dx} + \frac{1}{2} u_x}
\end{eqnarray*}
and applying the Lenard-Magri recursion scheme results in the Camassa-Holm equation (\ref{CHeqn}), where $u=v-v_{xx}\,$.  Another common feature between Camassa-Holm and KdV equations, closely related
to the above bi-Hamiltonian structure, is that both systems may be written as Euler equations on the Virasoro algebra (see \cite{KM} and references therein).

The class of Hamiltonian operators such that this tri-Hamiltonian duality may be most easily applied was first derived by Balinskii and Novikov \cite{BN} as
a special case of the Dubrovin-Novikov operators of hydrodynamic type \cite{DN}. The conditions for the operator
\[
\P^{ij} = { c^{ij}_{k}} u^k \diff{}{x} + { b^{ij}_{k} } u^k_x\,,  \qquad c^{ij}_{k} = b^{ij}_{k}+b^{ji}_{k}\,,
\]
to be a Hamiltonian  places purely algebraic conditions on the constants $b^{ij}_{k}$, and the
corresponding algebraic structure - on regarding these constants as structure constants - is now known as a Novikov algebra. With each Novikov algebra there is associated a translationally invariant Lie algebra, which can be centrally extended. Condition for the existence of cocycles
(either first-order or third-order Gelfand-Fuks cocycles) is equivalent to the existence of symmetric bilinear forms $g$ and $h$ on the Novikov algebra which satisfy
certain (quasi-Frobenius and Frobenius, respectively) compatibility conditions. Second order cocycles result in antisymmetric bilinear forms which again satisfy certain algebraic
relations.
This construction is outlined in Section~\ref{sec2}. Thus one obtains
a multi-component tri-Hamiltonian structure
\[
g^{ij}\frac{d~}{dx}\,, \qquad h^{ij}\frac{d^3~}{dx^3}\,, \qquad  \P^{ij}
\]
in direct analogue to (\ref{kdvparts}), defined algebraically in terms of Novikov algebras and compatible symmetric bilinear forms on the algebra. Using these ideas, the centrally extended translationally invariant Lie algebras associated to Novikov algebras can be considered as multi-component linear generalization of the Virasoro algebra.

A two-component integrable generalization of the Camassa-Holm equation, called CH2, has been proposed in \cite{LZ,CLZ}. This new system admits a Lax pair \cite{CLZ}, of the same type as the original Camassa-Holm equation, which is connected to the energy-dependent Schr\"odinger spectral problem, see \cite{AF1,AF2}.  An alternative Lie algebraic approach for the derivation of CH2, which is closer to the one considered in this article, was presented in \cite{Fa}. Further, taking the advantage of the energy-dependent Schr\"odinger spectral problems the CH2 equation is generalized in \cite{HI} to produce
an integrable multi-component family CH$(n,k)$ of equations with $n$ components and $k$ velocities. The direct relationships between our results with these from \cite{HI} remains to be clarified (see Section~\ref{exx} and Example~\ref{ex64} for some specific examples).

The purpose of this paper is two-fold. By mirroring the construction of the Camassa-Holm equation we will obtain multi-component versions of this equation (in fact, by
splitting the various structures more finely one can obtain equations which interpolate between KdV and Camassa-Holm equations). The nonlinear terms in these equations
are controlled by the properties of the Novikov algebra and associated bilinear forms. Secondly, using the (low-dimensional) classification of Novikov algebras
 obtained by Bai and Meng \cite{BM1} - extended to classify the second and third order Gelfand-Fuks cocycles - one can obtain a classification scheme for equations in this
 class purely in terms of an underlying algebraic structure. It should be noted that these algebras are classified over $\mathbb{C}\,$; in some cases a finer
decomposition exists over $\mathbb{R}\,$: see, for example \cite{BG}. It turns out that many Novikov algebras result in highly degenerate systems of integrable equations, so the classification scheme has to be extended to rule out these degenerate cases.

\section{Novikov algebras}\label{sec2}

We begin by briefly reviewing some of the basic concepts of linear Poisson tensors and brackets that will be needed elsewhere in the paper.
The material may all be found in the original papers \cite{BN,DN} or in expositions such as \cite{Dorf}. Other basic results may be found in the
Appendix.

\subsection{Linear Poisson tensors of hydrodynamic type}

Following \cite{BN} consider a homogeneous first order $n\times n$ operator\footnote{The summation convention is used throughout this article.}
\eq{\label{po}
\P^{ij} = g^{ij}(u) \diff{}{x} + b^{ij}_k u^k_x\,,\qquad x\in\Si,
}
depending on fields $u^1(x),\ldots,u^n(x)$. Here,  $g^{ij}(u) = c^{ij}_k u^k $ is symmetric and $b^{ij}_k$, $c^{ij}_{k}$ are constants. The operator is a Poisson operator if and only if
\begin{itemize}
\item $c^{ij}_k = b^{ij}_k + b^{ji}_k\,$;
\item $b^{ij}_k$ is the set of structure constants of an algebra $\Alg$, that is $e^i\cdot e^{j} = b^{ij}_k\, e^k$ where $e^1,\ldots,e^n$ are basis vectors, such that
\begin{subequations}\label{cond}
\eq{\label{conda}
&(a\cdot b)\cdot c = (a\cdot c)\cdot b\,,\\
&(a\cdot b)\cdot c - a\cdot(b\cdot c) = (b\cdot a)\cdot c - b\cdot(a\cdot c)\,.
}
\end{subequations}
\end{itemize}

This structure $\Alg$ is called a Novikov algebra.\footnote{All structures are considered over the field of complex numbers $\Cm$.} The relations \eqref{cond} may be written in a simple way by defining left and right multiplications by $L_a b = R_b a= a\cdot b$.
With these the identities \eqref{cond} are equivalent to
\eqq{
	\brac{R_a,R_b} =0\,,\qquad \brac{L_a,L_b} = L_{[a,b]}\,,
}
respectively. Here $\brac{a,b} = a\cdot b - b\cdot a$. Novikov Lie algebras $\Alg$ are Lie admissible, that is
the commutator defines structure of a Lie algebra on the underlying vector spaces $\Alg$. Note that if the multiplication is
commutative then the Novikov conditions \eqref{cond} reduce to associativity conditions.

Let $\Alg^*$ be the dual algebra with respect to the standard pairing $(\,,\,):\Alg^*\times\Alg\arrow\Cm$.
Then $(L_a^*u,b) = (R_b^*u,a) := (u,a\cdot b)$, where $a,b\in\Alg$ and $u\in\Alg^*$. The following
relation, which will be needed later, is equivalent to \eqref{conda}:
\eq{\label{rel1}
R_bL_a c = L_{R_ba}c \quad\iff\quad L^*_aR^*_b u = L^*_{R_ba}u.
}

\subsection{Lie-Poisson brackets}

Consider the infinite-dimensional Lie algebra $\Liea_\Alg$ on the space of $\Alg$-valued functions
of $x\in\Si$, with a Lie bracket of the form
\eq{\label{lb}
	\bc{a,b} : = a_x\cdot b - b_x\cdot a\,,\qquad a_x\equiv \diff{a}{x}.
}
In fact, \eqref{lb} defines a Lie bracket  if and only if the multiplication $\cdot:\Alg\times\Alg\arrow\Alg$ satisfies the conditions \eqref{cond} and hence the algebra $\Alg$ is a Novikov algebra.

The Poisson bracket associated to the Poisson operator \eqref{po} is a Lie-Poisson bracket associated to the Lie algebra $\Liea_\Alg$ (see Appendix~\ref{liepo}):
\eq{\label{pb}
\pobr{\H, \mathcal{F}}[u]:= \int_{\Si} \var{\mathcal{F}}{u^i}\P^{ij}\var{\H}{u^j}\,dx \equiv \ddual{u,\bc{\delta_u \mathcal{F},\delta_u \H}}\,,\qquad u\in\Liea_\Alg^*,
}
where $\H,\mathcal{F}\in\Fun(\Liea_\Alg^*)$ are functionals on the (regular) dual space to $\Liea_\Alg$. Here, the duality pairing $\Liea_\Alg^*\times\Liea_\Alg\arrow\Cm$ is given by
\eqq{
	\dual{u,a} := \int_{\Si}\bra{u,a} dx\,,
}
where $u\in\Liea_\Alg^*$ and $a\in\Liea_\Alg$.
Functionals from $\Fun(\Liea_\Alg^*)$  have the form
\eqq{
\H[u]=\int_{\Si} H(u,u_x,u_{xx},\ldots)\, dx\,,
}
their (variational) differentials $\delta_u \H = \var{\H}{u^i}e^i$ belong to $\Liea_\Alg\,,$
which follows from the relation \eqref{differ}.
The coadjoint  action, such that \eqref{coad}, is
\eqq{
\ad^*_au = -(R^*_a u)_x - L^*_{a_x}u.
}

\subsection{Central extensions}\label{cene}

Centrally extended Poisson bracket \eqref{pb} is given by
\eq{\label{epb}
\pobr{\H, \F}[u]:= \ddual{u,\bc{\delta_u \H,\delta_u \F}} + \omega\bra{\delta_u \H,\delta_u \F},
}
where the bilinear map $\omega$ on $\Liea_\Alg$ is a $2$-cocycle, see Appendix~\ref{central}. Let $\omega$ be defined by means of  a $1$-cocycle $\phi:\Liea_\Alg \arrow \Liea_\Alg^*$, i.e. the relation \eqref{op} is valid.
Then, the Poisson operator $\P$ of \eqref{epb}, such that $\pobr{\H, \F}:= \ddual{\P\delta_u \H,\delta_u \F}$,
has the form
\eq{\label{epo}
		\P\gamma = - \ad^*_{\gamma}u + \phi (\gamma)\,,\qquad \gamma\in\Liea_\Alg.
}

We will consider now some differential $1$-cocycles, yielding central extensions of Lie algebras associated with Novikov algebras.  They are generated by appropriate bilinear forms satisfying various algebraic conditions and were originally derived in~\cite{BN}.

\begin{itemize}
\smallskip
\item{\underline{Order-one cocycles:}}
A symmetric bilinear form $g$ on $\Alg$, $g(a,b)\equiv (\g(a),b)$, defines $1$-cocycle of order one
$\phi = \g\diff{}{x}$, that is the associated $2$-cocycle is
\eqq{
	\omega_1(a,b) = \int_{\Si} g(a_x,b)\, dx\,,
}	
if and only if the quasi-Frobenius condition
\eq{\label{cond1}
g(a\cdot b,c) = g(a,c\cdot b)
}
holds for all $a,b,c\in\Alg$. The condition \eqref{cond1} is equivalent either to
\eq{\label{rel2}
\g(R_b a) = R^*_b\g(a),
}
or
\eq{\label{rel3}
L^*_a \g(b) = L^*_b \g(a).
}

From the more general formalism of Poisson tensors of hydrodynamic type \cite{DN} we know that
the bilinear form $g^{ij}(u)=c^{ij}_{k} u^k + g_0^{ij}$, where $g_0$ is generating the above first order cocycle, if $\det g^{ij}(u)\neq 0$, can be interpreted as a contravariant flat metric.
Thus this bilinear form may be thought about in two different ways: as an inhomogeneous flat metric, or as a homogeneous flat metric with the addition of
a first-order cocycle term. Even if $\det(c^{ij}_{~~k} u^k)=0$ it still defines a Poisson tensor \cite{Grinberg}.

\smallskip

\item{\underline{Order-two cocycles:}}
A skew-symmetric  bilinear form $f$ on $\Alg$, $f(a,b)\equiv (\tilde{f}(a),b)$, defines $1$-cocycle of order two
$\phi = \f\diff{^2}{x^2}$, with the associated $2$-cocycle
\eqq{
	\omega_2(a,b) = \int_{\Si} f(a_{xx},b)\, dx\,,
}	
if and only if
\eq{\label{cond3}
		f(a\cdot b, c) = f(a, c\cdot b)
}
and
\eq{\label{cond4}
		f(a\cdot b, c) + f(b\cdot c, a) + f(c\cdot a, b) = 0\,,
}
that is the quasi-Frobenius and cyclic conditions are satisfied. The condition \eqref{cond3} may be rewritten in two equivalent ways: as
\eq{\label{rel5}
\f(R_b a) = R^*_b\f(a)\,,
}
or as
\eq{\label{rel6}
L^*_a \f(b) = -L^*_b \f(a).
}
On the other hand \eqref{cond4} is equivalent to
\eqq{
		L^*_b\f(a) = \f(a\cdot b - b\cdot a).
}

\smallskip

\item{\underline{Order-three cocycles:}}
A symmetric bilinear form $h$ on $\Alg$, $\h(a,b)\equiv (\tilde{h}(a),b)$, defines $1$-cocycle of third order
$\phi = \h\diff{^3}{x^3}$, with the associated $2$-cocycle
\eqq{
	\omega_3(a,b) = \int_{\Si} h(a_{xxx},b)\, dx\,,
}	
if and only if the tensor $h(a\cdot b,c)$ is totally symmetric, which reduces to two conditions:
\eq{\label{cond2}
		h(a\cdot b,c) = h(a,c\cdot b),\qquad  h(a,b\cdot c) = h(a,c\cdot b).
}
The conditions \eqref{cond2} are equivalent to
\eq{\label{rel4}
	L^*_a\h(b) = R^*_a\h(b) = \h(a\cdot b)=\h(b\cdot a).
}
\end{itemize}

\medskip

\noindent There are no higher order cocycles with constants coefficients. Let us denote by
$\mathscr{C}^i_{\Alg}$ the linear space spanned by bilinear forms generating cocycles
of $i$-th order and associated with a Novikov algebra~$\Alg$.

\section{Multicomponent bi-Hamiltonian Camassa-Holm hierarchies}

Consider the following pair of compatible Poisson operators $\P_0$ and $\P_1$ on $\Liea_\Alg$, associated with some Novikov algebra $\Alg\,$:
\begin{subequations}\label{pair}
\eq{
	\P_1\gamma = (R^*_\gamma u)_x + L^*_{\gamma_x}u + \g_1 \gamma_x +  \f_1 \gamma_{xx} + \h_1 \gamma_{xxx}
}
and
\eq{
	\P_0\gamma = \g_0 \gamma_x + \f_0 \gamma_{xx} + \h_0 \gamma_{xxx},
}
\end{subequations}
where $u\in\Liea^*_\Alg$ and $\gamma\in\Liea_\Alg$. Here $g_0$ and $g_1$ are symmetric bilinear forms on $\Alg$  generating
first order $1$-cocycles, $f_0$ and $f_1$  are skew-symmetric bilinear forms generating $1$-cocycles of
order two, while $h_0$ and $h_1$ are symmetric bilinear forms generating third order $1$-cocycles.
The operator $\P_1$  has the form of a centrally extended Poisson operator \eqref{epo}. The compatibility  of  $\P_0$ and $\P_1$ is a consequence of the fact that any linear combination of $1$-cocycles is itself
a $1$-cocycle. In many research papers the second Poisson structure is obtained by freezing the first one, which often yields restricted class of compatible Poisson operators. In the case of the linear Poisson operator $\P_1$ our approach gives the most general compatible with it Poisson operator $\P_2$ with constant
coefficients.

We can formulate the bi-Hamiltonian chain associated to the above Poisson  pair in the form
\eq{
\label{uh}
\begin{split}
u_{t_0} &= \P_0 \delta_u \H_0 \equiv 0\,,\\
u_{t_1} &= \P_1 \delta_u \H_0  = \P_0 \delta_u \H_1\,,\\
u_{t_2} &= \P_1 \delta_u \H_1 = \P_0 \delta_u \H_2\,,\\
&\qquad \vdots\,,\\
\end{split}
}
where $\H_i\in\Fun\bra{\Liea_\Alg^*}$ and $\H_0$ is a Casimir of $\P_0$, $t_i$ are evolution parameters (times)
of respective flows associated to vector fields $u_{t_i}$ on $\Liea_\Alg^*$.

Consider the operator $\Lambda:\Liea_\Alg\arrow\Liea_\Alg^*$ defined by
\eq{\label{lam}
\Lambda\gamma:= \g_0 \gamma + \f_0 \gamma_x + \h_0 \gamma_{xx},
}
such that $\P_0\gamma\equiv \Lambda \gamma_x$. The operator $\Lambda$ is self-adjoint,
i.e. $\Lambda^\dag= \Lambda$, and when $\Lambda$ is invertible it can be interpreted as an inertia operator,
see Remark~\ref{rem1}.
We assume that $\Lambda$ as a map is a diffeomorphism, and hence we impose its invertibility by the condition that the bilinear form $g_0$ is
nondegenerate (see Remark~\ref{rem}). As we will see in the following theorem, to obtain local forms of evolution equations one must perform the change of coordinates
\eq{\label{trans}
v:=\Lambda^{-1} u,
}
which is of (linear) Miura-type. Thus the hierarchy \eqref{uh} transforms into
\eq{\label{vh}
\begin{split}
v_{t_0} &= \tilde{\P}_0 \delta_v \H_0 \equiv 0\,,\\
v_{t_1} &= \tilde{\P}_1 \delta_v \H_0  = \tilde{\P}_0 \delta_v \H_1\,,\\
v_{t_2} &= \tilde{\P}_1 \delta_v \H_1 = \tilde{\P}_0 \delta_v \H_2\,,\\
&\qquad \vdots\,,\\
\end{split}
}
where $\tilde{\P}_i = \Lambda^{-1}\P_i(\Lambda)^{-1}$ and $\delta_v\H_j = \La\delta_u\H_j$.

\begin{theorem}\label{thr}
The first two evolution equations from the hierarchy \eqref{vh} are
\eqq{
	v_{t_1} &= v_x\cdot c\,,\\
\g_0(v_{t_2}) + \f_0(v_{xt_2}) + \h_0(v_{xxt_2}) &=
\g_0\bra{v_x\cdot(v\cdot c)} + \g_0\bra{v\cdot(v_x\cdot c)} + L^*_{v\cdot c} \g_0(v_x)\\
&\quad + \f_0\bra{v_x\cdot(v_x\cdot c)} + \f_0\bra{v_{xx}\cdot(v\cdot c)} \\
&\quad+ 2\h_0\bra{(v_x\cdot c)\cdot v_{xx}} + \h_0\bra{(v\cdot c)\cdot v_{xxx}}\\
&\quad +\g_1(v_x\cdot c) +\f_1(v_{xx}\cdot c) + \h_1(v_{xxx}\cdot c)\,,
}
where $c=\const\in\Liea_\Alg$ and $\delta_u\H_0 = c$. The densities of the first three conserved quantities (Hamiltonians), such that
\eqq{
\H_i[v] = \int_{\Si} H_i(v,v_x,v_{xx},\ldots)\, dx\,,\qquad i=0,1,\ldots
}
are
\eqq{
	H_0 &= g_0(c,v)\,,\\
	H_1 &= \frac{1}{2}\,g_0(v,v\cdot c) + \frac{1}{2}\,f_0(v_x,v\cdot c) + \frac{1}{2}\,h_0(v_{xx},v\cdot c)\,,\\
	H_2 &= \frac{1}{3}\,g_0\bra{v,v\cdot(v\cdot c)} + \frac{1}{3}\,f_0\bra{v_x,v\cdot(v\cdot c)}
	+ \frac{1}{3}\,h_0\bra{v\cdot c,v\cdot v_{xx}}\\
&\quad +\frac{1}{6}\,g_0\bra{v\cdot c,v\cdot v} + \frac{1}{6}\,h_0\bra{v_x\cdot c,v_x\cdot v}\\
&\quad+ \frac{1}{2}\,g_1(v,v\cdot c) + \frac{1}{2}\,f_1(v_x,v\cdot c) + \frac{1}{2}\,h_1(v_{xx},v\cdot c)\,.
}
\end{theorem}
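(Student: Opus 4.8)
The plan is to verify the stated formulas by running the Lenard--Magri recursion \eqref{uh} directly in the $u$--coordinates and then transporting the result through the constant--coefficient operator $\Lambda$ of \eqref{lam}. Two structural facts make this efficient: first, $\Lambda$ and each of the forms $\g_0,\f_0,\h_0,\g_1,\f_1,\h_1$ commute with $\pr_x$, and $\Lambda$ is invertible by nondegeneracy of $g_0$ (cf. \eqref{trans}); second, the coadjoint and cocycle terms of $\P_1$ can be pulled through the forms by the identities of Section~\ref{sec2} --- \eqref{rel2} and \eqref{rel5} move $R^*$ through $\g$ and $\f$, \eqref{rel4} moves both $R^*$ and $L^*$ through $\h$, \eqref{rel3} and \eqref{rel1} handle the $L^*$--$\g$ terms, and the skew identity \eqref{rel6} together with \eqref{rel1}, \eqref{rel5} handles the $L^*$--$\f$ terms. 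Since $\delta_v\H_j=\La\delta_u\H_j$ and $\tilde\P_i=\Lambda^{-1}\P_i\Lambda^{-1}$, each flow obeys $\Lambda v_{t_i}=u_{t_i}=\P_1\delta_u\H_{i-1}$, so it is enough to evaluate the right--hand sides $\P_1\delta_u\H_{i-1}$ and read off $v_{t_i}$; the displayed left--hand side $\g_0(v_{t_2})+\f_0(v_{xt_2})+\h_0(v_{xxt_2})$ is precisely $\Lambda v_{t_2}$.

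First I fix the seed of the chain. As $c$ is constant, $\P_0 c=\g_0 c_x+\f_0 c_{xx}+\h_0 c_{xxx}=0$, so $\delta_u\H_0=c$ is a Casimir of $\P_0$ and $u_{t_0}=0$. For the first flow every term of $\P_1$ except $(R^*_c u)_x$ carries a derivative of $c$, so $u_{t_1}=(R^*_c u)_x$; substituting $u=\Lambda v=\g_0 v+\f_0 v_x+\h_0 v_{xx}$ and applying \eqref{rel2}, \eqref{rel5}, \eqref{rel4} term by term gives the key identity $R^*_c u=\g_0(v\cdot c)+\f_0(v_x\cdot c)+\h_0(v_{xx}\cdot c)=\Lambda(v\cdot c)$. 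Hence $u_{t_1}=\Lambda\,\pr_x(v\cdot c)=\Lambda(v_x\cdot c)$, and inverting $\Lambda$ yields $v_{t_1}=v_x\cdot c$. Comparing this with $u_{t_1}=\P_0\delta_u\H_1=\Lambda(\delta_u\H_1)_x$ and using invertibility of $\Lambda$ gives $\delta_u\H_1=v\cdot c$ (modulo a Casimir of $\P_0$), which seeds the second step.

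The main computation is $\Lambda v_{t_2}=\P_1(v\cdot c)$. Writing $w=v\cdot c$ and expanding $\P_1 w=(R^*_w u)_x+L^*_{w_x}u+\g_1 w_x+\f_1 w_{xx}+\h_1 w_{xxx}$, the term $(R^*_w u)_x$ produces, via \eqref{rel2}/\eqref{rel5}/\eqref{rel4} followed by the Leibniz rule, the two $\g_0$ terms $\g_0\bra{v_x\cdot(v\cdot c)}$, $\g_0\bra{v\cdot(v_x\cdot c)}$, both $\f_0$ terms, one copy of $\h_0\bra{(v_x\cdot c)\cdot v_{xx}}$, and $\h_0\bra{(v\cdot c)\cdot v_{xxx}}$. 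The term $L^*_{w_x}u$ supplies the rest: its $\h_0$ part gives, by \eqref{rel4}, a second copy of $\h_0\bra{(v_x\cdot c)\cdot v_{xx}}$, which accounts for the factor $2$; its $\g_0$ part equals $L^*_{v\cdot c}\g_0(v_x)$, which I obtain by setting $q=R^*_c\g_0$, proving the symmetry $L^*_a q(b)=L^*_b q(a)$ from \eqref{rel1}, \eqref{rel2}, \eqref{rel3}, and rewriting $L^*_{v_x\cdot c}\g_0(v)=L^*_{v_x}q(v)$ and $L^*_{v\cdot c}\g_0(v_x)=L^*_{v}q(v_x)$ via \eqref{rel1}; and its $\f_0$ part $L^*_{v_x\cdot c}\f_0(v_x)$ vanishes, since \eqref{rel1} and \eqref{rel5} give $L^*_{v_x\cdot c}\f_0(v_x)=L^*_{v_x}\f_0(v_x\cdot c)$ while \eqref{rel6} gives $L^*_{v_x}\f_0(v_x\cdot c)=-L^*_{v_x\cdot c}\f_0(v_x)$, forcing it to be zero. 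Collecting the survivors together with $\g_1 w_x+\f_1 w_{xx}+\h_1 w_{xxx}$ reproduces the stated right--hand side.

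Finally I confirm the Hamiltonian densities by computing their variational differentials and matching them with the recursion covectors $\delta_u\H_0=c$, $\delta_u\H_1=v\cdot c$, and the covector fixed by $\P_0\delta_u\H_2=u_{t_2}$, using $\delta_v\H_j=\Lambda\delta_u\H_j$. The variations rely on the (skew-)symmetry of the forms, integration by parts on $\Si$ to move $\pr_x$ off the test variation, and \eqref{rel2}/\eqref{rel4}/\eqref{rel5} to carry the right multiplication $R_c$ across the forms; for $H_1$ this directly yields $\delta_v\H_1=\g_0(v\cdot c)+\f_0(v_x\cdot c)+\h_0(v_{xx}\cdot c)=\Lambda(v\cdot c)$, i.e. $\delta_u\H_1=v\cdot c$, as required. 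I expect the genuine obstacles to be the two heaviest computations: the $\f_0$ cancellation and the emergence of the factor $2$ in the second flow, both governed by the skew identity \eqref{rel6} and by \eqref{rel1}; and the verification of $H_2$, whose variation mixes cubic products such as $v\cdot(v\cdot c)$ and requires the Novikov identities \eqref{cond} to reorganize them before the quasi--Frobenius and Frobenius conditions can be brought to bear.
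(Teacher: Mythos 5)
Your proposal is correct and, for the evolution equations, follows essentially the same route as the paper: both compute $u_{t_1}=R^*_cu_x$, identify $\delta_u\H_1$ with $v\cdot c$ via the intertwining relation $R^*_a\La=\La R_a$ (your key identity $R^*_cu=\La(v\cdot c)$ is exactly this), expand $\P_1\delta_u\H_1$ term by term, kill the $L^*$--$\f_0$ contribution by the skew identity \eqref{rel6} combined with \eqref{rel1} and \eqref{rel5} (the paper's relation \eqref{rr}), and obtain the factor $2$ from the coincidence of the $\h_0$ contributions of $(R^*_\gamma u)_x$ and $L^*_{\gamma_x}u$; your symmetry lemma for $q=R^*_c\g_0$ is a slightly longer path to what \eqref{rel1}, \eqref{rel2} and \eqref{rel3} give directly, namely $L^*_{v_x\cdot c}\g_0(v)=L^*_{v\cdot c}\g_0(v_x)$. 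The one genuine divergence is the treatment of the Hamiltonians: the paper derives $\H_1$ and $\H_2$ constructively from the covectors by the homotopy formula $\H[u]=\int_0^1\dual{u,\gamma(\la u)}\,d\la$, which produces the coefficients $\tfrac12,\tfrac13,\tfrac16$ automatically from the homogeneity degrees of the pieces of $\gamma$, whereas you propose to verify the stated densities by computing their variational derivatives and matching them to the recursion covectors. Your direction is logically sufficient, since the theorem states the densities and a functional is determined by its variational derivative up to a trivial one, but it is the heavier computation --- as you yourself concede for $H_2$ --- and the homotopy formula is precisely the device that makes that step nearly free; adopting it would close the one part of your plan that is still only an outline.
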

\begin{proof}
Recall that by Poincar\'e lemma a closed $1$-form (co-vector) $\gamma$ in a star-shape (local) coordinate system
$\{u\}$ is exact, that is $\gamma = \delta_u \H$. The functional $\H$ can be obtained
using the homotopy formula~\cite{O}:
\eqq{
		\H[u] = \int_0^1\dual{u, \gamma(\la u)} d\la\,.
}
A useful relation, which follows from \eqref{rel2}, \eqref{rel5} and \eqref{rel4}, is
\eqq{
	R^*_a\La = \La R_a\qquad\iff\qquad \La^{-1}R^*_a = R_a\La^{-1}\,,\qquad a\in\Alg\,,
}
which will be used below.

Casimirs of $\P_0$ are constants $c\in\Alg$. Hence, taking $\delta_u \H_0 = c$ we have
\eqq{
		\H_0 = \dual{u,c} = \dual{\La v,c}.
}
Thus,
\eqq{
	u_{t_1} = \P_1\delta_u\H_0 = R^*_c u_x
}
and from \eqref{uh} one finds that $\delta_u\H_1 = \La^{-1}R^*_c u$. Using the homotopy formula one finds that
\eqq{
		\H_1 = \frac{1}{2}\dual{\La^{-1}u, R^*_c u} = \frac{1}{2}\dual{\La v, v\cdot c}.
}
The second flow is
\eq{\label{secflow}
\begin{split}
	u_{t_2} &= \P_1\delta_u\H_1\\
	&= (R^*_{R_c\La^{-1} u} u)_x + L^*_{\La^{-1}u_x}R^*_cu + \g_1 R_c\La^{-1}u_x + \f_1 R_c\La^{-1} u_{xx}
	+ \h_1 R_c\La^{-1} u_{xxx}
\end{split}
}
or equivalently
\eqq{
	\La v_{t_2} =
	(R^*_{R_cv} \La v)_x + L^*_{v_x}\La R_c v +\g_1 R_cv_x + \f_1 R_c v_{xx} + \h_1 R_cv_{xxx}.
}
Note that
\eq{\label{rr}
	L^*_{v_x}\f_0 R_c v_x = 0,
}
which is a consequence of relations \eqref{rel1}, \eqref{rel5} and \eqref{rel6}.
Using the properties of the symmetric forms $\g_0$, $\h_0$ and \eqref{rr} one can show that
\eqq{
	L^*_{v_x}\La R_c v = \frac{1}{2}\bra{L^*_v\g_0 R_c v + L^*_{v_x}\h_0 R_c v_x}_x.
}
Hence,
\eqq{
	\delta_v\H_2\equiv \La\delta_u\H_2 = R^*_{R_cv} \La v + \frac{1}{2} L^*_v\g_0 R_c v  + \frac{1}{2}L^*_{v_x}\h_0 R_c v_x\,,
	+ \g_1 R_cv + \f_1 R_cv_x + \h_1 R_cv_{xx}
}
and finally, using the homotopy formula, one obtains
\eqq{
	\H_2 &= \frac{1}{3}\dual{\La v, R_{R_cv}v} + \frac{1}{6}\dual{\g_0R_cv,L_vv}
	+ \frac{1}{6}\dual{\h_0R_cv_x,L_{v_x}v}\\
	&\quad + \frac{1}{2}\dual{\g_1R_cv,v} + \frac{1}{2}\dual{\f_1R_cv_x,v} + \frac{1}{2}\dual{\h_1R_cv_{xx},v},
}
which finishes the proof.
\end{proof}

For non-nilpotent Novikov algebras the invertibility of $\Lambda$ guarantees the existence of
the (hereditary) recursion operator $\mathcal{R} = \P_1 \P_0^{-1}$, such that $u_{t_i} = \mathcal{R}^i u_{t_0}$, and hence the existence of the infinite hierarchies of commuting evolution equations and conserved quantitates. Thus the equations from the hierarchy \eqref{uh} (or equivalently \eqref{vh}) are integrable, see \cite{O}.

\begin{remark}\label{rem1}
Observe that the Hamiltonian flow \eqref{secflow} on the dual space $\Liea_\Alg^*$ is the Euler equation
(see Appendix~\ref{Euler}),
corresponding to the centrally extended Lie algebra $\Liea_\Alg\oplus\mathscr{C}^1_\Alg\oplus\mathscr{C}^2_\Alg\oplus\mathscr{C}^3_\Alg$, with the quadratic Hamiltonian
\eqq{
		\H_1 = \frac{1}{2}\dual{u, \La^{-1}R^*_c u}.
}
This Euler equation transformed through \eqref{trans} to $\Liea_\Alg$ is the second flow from Theorem~\ref{thr}.
\end{remark}

\medskip

These formulae simplify if the Novikov algebra $\Alg$ has a right-unity $e$, i.e.~$u\cdot e = u$ for all $u\in\Alg$ (recall that if the algebra
has a left-unity then it is automatically commutative and associative). Then, taking $c=e$
the evolution equations and Hamiltonians from Theorem~\ref{thr} simplify to
\eq{
\notag	v_{t_1} &= v_x\,,\\
\label{eq} \g_0(v_{t_2}) + \f_0(v_{xt_2}) + \h_0(v_{xxt_2}) &=
	\Bigl ( \g_0\bra{v\cdot v}  +\frac{1}{2} L^*_{v}\, \g_0\, v+ \f_0\bra{v_x\cdot v}
	+ \frac{1}{2}\h_0\bra{v_x\cdot v_x}\\
\notag	&\qquad + \h_0\bra{v\cdot v_{xx}} +\g_1 v +\f_1 v_x + \h_1 v_{xx}\Bigl )_x
}
and
\eqq{
	H_0 &= g_0(e,v)\,,\\
	H_1 &= \frac{1}{2}\,g_0(v,v) + \frac{1}{2}\,f_0(v_x,v) + \frac{1}{2}\,h_0(v_{xx},v)\,,\\
	H_2 &= \frac{1}{2}\,g_0\bra{v,v\cdot v} + \frac{1}{3}\,f_0\bra{v_x,v\cdot v}
	+ \frac{1}{3}\,h_0\bra{v ,v\cdot v_{xx}} + \frac{1}{6}\,h_0\bra{v,v_x\cdot v_x}\\
&\quad+ \frac{1}{2}\,g_1(v,v) + \frac{1}{2}\,f_1(v_x,v) + \frac{1}{2}\,h_1(v_{xx},v)\,.
}

\begin{remark}\label{rem}
Note that the invertibility of the inertia operator \eqref{lam}, that is $\Lambda = \g  + \f \pr_x + \h \pr_{x}^2$ (for simplicity the subscript is omitted here), is understood in the sense of invertibility of a pseudo-differential
operator \cite{O}.\footnote{There is also possible other  approach to the invertibility of the inertia operator
such as the analytic approach presented in \cite{KLM}.} That is, if $\det\g\neq 0\,,$ the inverse is given by a formal differential series
\eqq{
\Lambda^{-1} = \g^{-1}  -  \g^{-1}\f\g^{-1}\pr_x + \g^{-1}\bra{\f\g^{-1}\f - \h}\g^{-1}\pr_x^2 + \ldots\ .
}
There are two important special cases which will occur in the next section, and for which Theorem~\ref{thr} still holds.
The first one is when $\g=0$ and $\det\f\neq 0$. In this case the inverse is
\eqq{
\Lambda^{-1} = \f^{-1}\pr_x^{-1}  -  \f^{-1}\h\f^{-1} + \f^{-1}\h\f^{-1}\h\f^{-1}\pr_x + \ldots\ .
}
The second is when $\g=\f=0$ and $\det\h\neq 0$. In this case the inverse is $\Lambda^{-1} = \h^{-1}\pr_x^{-2}$.
\end{remark}

\section{A classification of integrable systems on Novikov algebras}\label{cis}

In \cite{BM1} Novikov algebras $\Alg$ of dimension up to $n=3$ were classified over $\mathbb{C}$ and in \cite{BM2} this work was
extended to classify four-dimensional transitive Novikov algebras\footnote{A Novikov algebra is called
transitive (or right-nilpotent) if every $R_a$ is nilpotent.}. In \cite{BG} a
different approach to the classification of low-dimensional Novikov algebras $\Alg$ was presented. This is based on the classification
of the associated Lie algebras $(\Alg,[\, ,\,]), $\footnote{Naturally, these Lie algebras should
not be confused with the translationally invariant Lie algebras $\Liea_\Alg$.} defined by means of a commutator
$\brac{a,b}=a\cdot b - b\cdot a$.
As result the authors classified four-dimensional Novikov algebras for which associated Lie algebras are nilpotent.
However, the full classification of Novikov algebras of dimension four is far from being complete.

To be able to compute the associated evolution equations, using Theorem~\ref{thr}, one must additionally
classify the bilinear forms generating central extensions of the Lie algebras $\Liea_\Alg$.
This classification is presented in Appendix~\ref{class}. The various defining relations for these bilinear forms
are just linear equation and hence may be solved using, for example, Mathematica.
However, not all of these Novikov algebras will lead
to the construction of \lq proper\rq~evolution equations:
\begin{itemize}
\item All Novikov algebras that are direct sums of lower dimensional algebras can be omitted as they lead to evolution equations consisting of decoupled systems associated to the
lower dimensional Novikov algebras.
\item We are interested in the construction of evolution equations for which the coefficients
of associated (infinite-dimensional) vector fields will depend on all fields. This leads to the exclusion of
Novikov algebras for which the generic right multiplication (or equivalently left multiplication) has rank lower then the
algebra dimension.\footnote{One could claim that some of this cases can lead to proper reduced systems. However, careful inspection shows that  for such low-dimensional Novikov algebras this cannot be the case.}
This occurs for all transitive Novikov algebras appearing in the classification schemes in \cite{BM1,BM2,BG}.
\item The invertibility of the inertia operator $\Lambda$ impose nondegeneracy of the bilinear form $g_0$.
Hence, the Novikov algebras for which \lq generic\rq~bilinear forms $g$, generating central extension of
first order, are degenerate can also be omitted.
\end{itemize}

As result, the only relevant Novikov algebras are in dimension one: the field of complex numbers~$\Cm$;
in dimension two:  $(N3)$--$(N6)$; in dimension three: $(C6)$, $(C8)$, $(C9)$, $(C16)$, $(C19)$, $(D2)$--$(D5)$;
in dimension four (within the Novikov algebras from \cite{BM2,BG} and this paper):
$\widetilde{A}_{3,3}$, $\widetilde{A}_{3,4}$, $N^{\mathfrak{h}_1}_{22}$, $N^{\mathfrak{h}_1}_{23}$, $N^{\mathfrak{h}_1}_{24}$, $N^{\mathfrak{h}_2}_{27}$ and $\Alg_4$ (see Section~\ref{san}). We use the symbols adopted in \cite{BM1} and \cite{BG}.

Most of the relevant Novikov algebras under consideration in dimension two, three and four lead to the construction
of evolution equations from Theorem~\ref{thr} in a triangular form.\footnote{By triangular evolution system we mean such that can be represented in the form: $(u_1)_t = K_1(u_1),\ (u_2)_t = K_2(u_1,u_2), \ (u_3)_t = K_3(u_1,u_2,u_3),\ \ldots\ $.} The only non-triangular systems are associated to the algebras $(N4)$, $(C8)$ and $\Alg_4$.

Contrary to Section~\ref{sec2} we will use here the \lq contravariant\rq~ convention for Novikov algebras (so the formulae agree with the prior work \cite{BM1}).
Therefore the structure constants of a Novikov algebra~$\Alg$, with basis vectors $e_1,\ldots,e_n$, are given by $b^i_{jk}$  such that
\eq{\label{chm}
		(a\cdot b)^i := b^i_{jk}a^j b^k\quad\iff\quad e_i\cdot e_j = b^k_{ij}\,e_k,
}
where $a,b\in\Alg$. The related characteristic matrix is given by $\mathcal{B} = (b_{ij})$, where $b_{ij}:= b^k_{ij}\,e_k$.

\subsection{Dimension one}

The only relevant one-dimensional Novikov algebra $\Alg$ is generated by the rule $u\cdot v := uv$.
This algebra is obviously isomorphic to $\Cm$.

Let $\g_0 = g$, $\g_1=\alpha$, $\h_0 = h$ and $\h_1 = \beta$.  There are no cocycles of second order, thus $\f_1=\f_2=0$. For $c=1$ the evolution equation \eqref{eq} becomes
\eq{\label{KM}
	 g v_t+h v_{xxt} =  \alpha  v_x+3 g v v_x +2 h v_x v_{xx} +h v v_{xxx} +\beta  v_{xxx}\,,
}
here $v\in\Liea_\Cm$. The first order linear term involving the constant $\alpha$ can always be eliminated by a linear change of independent coordinates. The evolution equation \eqref{KM} (with $\alpha=0$) was obtained in \cite{KM}.
In particular for $g=1$ and $h=\alpha=0$ we obtain the Korteweg--de Vries (KdV) equation
\eqq{
	v_{t} = 3 vv_x + \beta v_{xxx}\,,
}
for $g=1$ and $h=-1$ we have the Camassa-Holm equation \cite{CH}
\eqq{
	v_{t} - v_{xxt} = \alpha  v_x + 3 vv_x - 2v_x v_{xx} - v v_{xxx}
				 + \beta v_{xxx}\,.
}
Finally, if $g=\beta=\alpha=0$ and $h=1$ then \eqref{KM} becomes the Hunter-Saxton equation \cite{HS}
\eqq{
	v_{xxt} = 2 v_x v_{xx} + v v_{xxx}\,.
}

\subsection{Two-dimensional algebra $(N4)$}

This Novikov algebra is non-abelian and associative, its structure matrix is
\eqq{
B= \pmatrx{0 & e_1 \\ 0 & e_2}.
}

Consider the most general form of equation \eqref{eq} with the right unity $c=\bra{0,\,  1}^\T$. Thus we take the following bilinear forms generating associated cocycles (see Table~\ref{tab1})
\eqq{
\g_0 = \pmatrx{g_{1 1} & g_{12} \\ g_{12} & g_{22}},
\quad
\f_0=\pmatrx{0 & f \\ -f & 0},\quad
\h_0=\pmatrx{0 & 0 \\ 0 & h},
}
and
\eqq{
\g_1 = \pmatrx{\alpha _{1 1} & \alpha _{12} \\ \alpha _{12} & \alpha _{22}},
\quad
\f_1 = \pmatrx{0 & \gamma \\ -\gamma & 0},
\quad
\h_1 = \pmatrx{0 & 0 \\ 0 & \beta}.
}
Let $\bra{u,\,  v}^\T\in\Liea_\Alg$. Then, the equation \eqref{eq} has the form
\eq{\label{n2}
\begin{split}
g_{11} u_t+g_{12} v_t+f v_{xt} &=  \bra{\alpha _{11} u+ g_{11} uv+\alpha _{12} v + g_{12} v^2+f v v_{x}+\gamma  v_{x}}_x\,,\\
g_{12} u_t+g_{22} v_t-f u_{xt}+h v_{x xt} &=  \Bigl ( \alpha _{12} u + \frac{1}{2}g_{11} u^2 + 2g_{12} uv+\alpha _{22} v +\frac{3}{2} g_{22} v^2 -f u_x v\\
&\qquad -\gamma  u_{x} + \frac{1}{2} h v_x^{\,2} + h v v_{xx} +\beta  v_{xx} \Bigr )_x.
\end{split}
}
\noindent We now show how various examples of 2-component Camassa-Holm equations that have appeared already in the literature fall into this scheme
by identifying the underlying Novikov algebras and bilinear forms.

\begin{example}
For the Novikov algebra $(N4)$ with the particular choice
\eqq{
\g_0 = \pmatrx{1 & 0 \\ 0 & 1},\qquad
\g_1 = \pmatrx{0 & 0 \\ 0 & 0}
}
one obtains the system
\eqq{
 u_t+f v_{xt} &=  \bra{uv +f vv_x+\gamma  v_{x}}_x\,,\\
 v_t-f u_{xt}+h v_{x xt} &= \bra{\frac{1}{2}u^2 + \frac{3}{2} v^2 - f u_x v -\gamma  u_{x}  + \frac{1}{2} h v_x^{\,2}+h v v_{xx}+\beta  v_{xx}}_x\,,
}
which, for $f=\gamma=0\,,$ reduces to
\eqq{
 u_t &=  \bra{uv }_x\,,\\
 v_t+h v_{x xt} &= \bra{\frac{1}{2}u^2 + \frac{3}{2} v^2  + \frac{1}{2} h v_x^{\,2}+h v v_{xx}+\beta  v_{xx}}_x\,.
}
This system, when $\beta=0$, was obtained in \cite{GO} and it is an extension of the Ito equation
\cite{I} ($h=0$).
\end{example}

\begin{example} For the Novikov algebra $(N4)$ with the particular choice
\eqq{
\g_0 = \pmatrx{0 & 1 \\ 1 & 0},\qquad
\g_1 = \pmatrx{\alpha_1 & 0 \\ 0 & \alpha_2}
}
one obtains the system
\eqq{
 v_t+f v_{xt} &=  \bra{\alpha _1 u + v^2 + f vv_x+ \gamma  v_{x}}_x\,,\\
 u_t-f u_{xt}+h v_{x xt} &= \bra{2 uv+\alpha _2 v - f u_x v -\gamma  u_{x} + \frac{1}{2} h v_x^{\,2} +h v v_{xx}+\beta  v_{xx}}_x,
}
and in the case $f=h=0$ this reduces to
\eqq{
 v_t &=  \bra{\alpha _1 u + v^2 + \gamma  v_{x}}_x\,,\\
 u_t &= \bra{2 uv+\alpha _2 v -\gamma  u_{x} +\beta  v_{xx}}_x\,.
}
For $\alpha_2=0$ this system is the dispersive water waves (DWW) equation considered in \cite{Kup}.
If $\beta=\alpha_2=0$ and $\gamma=1$ this system is equivalent to the Kaup-Broer system \cite{K,B}, which under the constraint $u=0$ reduces to the Burgers equation.
\end{example}

\begin{example}
For the Novikov algebra $(N4)$ with the choice
\eqq{
\g_0 = \pmatrx{-1 & 0 \\ 0 & 1},\qquad
\g_1 = \pmatrx{\alpha_1 & 0 \\ 0& \alpha_2}
}
we get the system
\eqq{
u_t-f v_{xt} &=  \bra{uv -f v v_{x}- \alpha_1  u-\gamma  v_{x}}_x\,,\\
v_t-f u_{xt}+h v_{xxt} &=  \bra{-\frac{1}{2}u^2 + \frac{3}{2} v^2+ \alpha_2  v  - f u_x v - \gamma  u_{x}
+\frac{1}{2} h v_x^{\,2}+h v v_{xx}+\beta  v_{xx}}_x\,,
}
which for $h=-1$ and $f=\beta=\alpha_1=0$ is the $2$-component Camassa-Holm equation (CH2)
derived in \cite{LZ,CLZ}, see also \cite{Fa}. If $f=h=0$ the above system reduces to another water wave type equation
\eqq{
 u_t &=  \bra{uv  -\alpha_1  u-\gamma  v_{x}}_x\,,\\
 v_t &= \bra{ -\frac{1}{2}u^2 + \frac{3}{2} v^2  + \alpha_2  v - \gamma  u_{x} +\beta  v_{xx}}_x.
}
\end{example}

\begin{example}
The algebra $(N4)$ is the only Novikov algebra of dimension less or equal to $3$ with non-degenerate
bilinear form generating second order central extension. Thus with the particular choice
\eqq{
\g_0 = \pmatrx{0 & 0 \\ 0 & 0},\quad \f_0=\pmatrx{0 & 1 \\ -1 & 0},
}
(allowed since $\f_0$ is non-degenerate) one obtains the system
\eqq{
 v_{xt} &=  \bra{\alpha _{11} u + \alpha _{12} v + vv_x + \gamma  v_{x}}_x\,,\\
 -u_{xt}+h v_{xxt} &= \bra{\alpha _{12} u + \alpha _{22} v - u_x v -\gamma  u_{x} + \frac{1}{2} h v_x^{\,2} + h v v_{xx}
 +\beta  v_{xx}}_x.
}
\end{example}

\subsection{Three-dimensional Novikov algebra $(C8)$}\label{exx}
This Novikov algebra is a straightforward three-dimensional generalisation of the two-dimensional algebra $(N4)$.
In fact, this algebra admits a straightforward generalization to arbitrary dimensions (see Section \ref{san}).

Here we will consider only one particular case with
\eqq{
\g_0 =\pmatrx{1 & 0 & 0 \\ 0 & 1 & 0 \\ 0 & 0 & 1},
\quad
\f_0 = \pmatrx{ 0 & 0 & f_{1} \\ 0 & 0 & f_{2} \\ -f_{1} & -f_{2} & 0},
\quad
\h_0 = \pmatrx{ 0 & 0 & 0 \\  0 & 0 & 0 \\ 0 & 0 & h}
}
and
\eqq{
\g_1 = \pmatrx{0 & \alpha _1 & 0 \\ \alpha _1 & 0 & 0 \\ 0 & 0 & \alpha _2},
\quad
\f_1 = \pmatrx{ 0 & 0 & \gamma _{1} \\ 0 & 0 & \gamma _{2} \\ -\gamma _{1} & -\gamma _{2} & 0},
\quad
\h_1 = \pmatrx{0 & 0 & 0 \\ 0 & 0 & 0 \\ 0 & 0 & \beta}.
}
Then, for $(u,v,w)^\T\in\Liea_\Alg$ and $c=(0,0,1)^\T$ the system \eqref{eq} takes the form
\eqq{
 u_t+f_1 w_{xt}  &= \bra{\alpha _1 v + uw  + f_1 ww_x +\gamma _1 w_{x}}_x\,,\\
 v_t+f_2 w_{xt} &= \bra{\alpha _1 u +  vw + f_2 ww_x + \gamma _2 w_{x}}_x\,,\\
 w_t-f_1 u_{xt}-f_2 v_{xt}+h w_{xxt} &= \biggr (\frac{1}{2}u^2 + \frac{1}{2}v^2  + \frac{3}{2}w^2+ \alpha _2 w -f_1 u_x w - f_2 v_x w - \gamma _1 u_{x}-\gamma _2 v_{x}\\
 &\qquad  + \frac{1}{2} h w_x^{\,2}+h w w_{xx} +\beta  w_{xx}\biggl )_x\,.
}

For $\alpha_{1,2}= f_{1,2} = \gamma_{1,2} =0$ we obtain:
\eqq{
 u_t  &= \bra{uw}_x\,,\\
 v_t &= \bra{vw}_x\,,\\
 w_t+h w_{xxt} &= \biggr (\frac{1}{2}u^2 + \frac{1}{2}v^2  + \frac{3}{2}w^2  + \frac{1}{2} h w_x^{\,2}+h w w_{xx} +\beta  w_{xx}\biggl )_x\,.
}

\begin{remark}\label{rem2}
This system for $\beta=0$, after the change of dependent variables: $q_1= w + h w_{xx}$, $q_2 = u^2$ and $q_3 = v^2$, is equivalent to the system obtained from CH$(3,1)$  \cite{HI} by the scaling $q_3\map \mu q_3$ followed by the limit $\mu\map 0$.\footnote{We would like to thank the referee for this observation.}
\end{remark}

\section{An $n$-dimensional abelian and associative Novikov algebra $\Tm_n$}\label{saa}

It turns out that many Novikov algebras with nontrivial algebraic properties result in systems of evolution equations
which are degenerate, for example, not fully nonlinear in all of the variables or, as in the example below, triangular. Consider an $n$-dimensional abelian $\Tm_n$ algebra defined by the multiplication rule
\eqq{
	(a\cdot b)^i = \sum_{k=1}^i a^k\, b^{i-k+1}\quad\iff\quad e_i\cdot e_j = \delta_{i+j-1}^k\,e_k\,,
}
where $a,b\in\Tm_n$. The related structure constants are $b^k_{ij} = \delta_{i+j-1}^k$.
For dimensions $n=1,2$ and $4$ the algebra $\Tm_n$ coincides
with Novikov algebras $\Cm$, $(N3)$ and $\widetilde{A}_{3,3}$, respectively.

\begin{proposition}
For any dimension $n$ the algebra $\Tm_n$ is an associative Novikov algebra. Moreover:
\begin{itemize}
\item an arbitrary symmetric bilinear form $\g = (g_{ij})$ on $\Tm_n$, satisfies the quasi-Frobenius condition \eqref{cond1}
if and only if
\eqq{
		g_{ij} = \begin{cases}
g_{1,i+j-1} & \text{for}\quad 1\les i+j-1\les n\\0 & \text{otherwise}
\end{cases},
}
such bilinear form is non-degenerate if $g_{1n}\neq 0\,$;
\item there is no nonzero anti-symmetric bilinear form $\f = (f_{ij})$ on $\Tm_n$ that satisfies the conditions \eqref{cond3} and \eqref{cond4}.
\end{itemize}
\end{proposition}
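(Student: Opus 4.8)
The plan is to identify $\Tm_n$ with a familiar truncated polynomial algebra and then to exploit the fact that it has a unit, which collapses every one of the algebraic conditions to a single substitution. First I would observe that $e_i \mapsto t^{i-1}$ is an algebra isomorphism $\Tm_n \cong \Cm[t]/(t^n)$, since the rule $e_i\cdot e_j = e_{i+j-1}$ (read as $0$ when $i+j-1>n$) corresponds to $t^{i-1}t^{j-1}=t^{i+j-2}$ modulo $t^n$. As $\Cm[t]/(t^n)$ is commutative and associative, and since the Novikov axioms \eqref{cond} reduce to associativity in the commutative case (as noted in Section~\ref{sec2}), this settles the first assertion. I would also record the fact used repeatedly below, namely that $e_1$ is a two-sided unit: $e_1\cdot a = a\cdot e_1 = a$, because $e_1\cdot e_q = e_{1+q-1}=e_q$.

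For the symmetric forms the forward implication is one evaluation of the quasi-Frobenius condition \eqref{cond1} at $a=e_1$. Writing $g_{ij}:=g(e_i,e_j)$ and taking $b=e_q$, $c=e_p$ gives
\[
g_{pq}=g(e_q,e_p)=g(e_1\cdot e_q,e_p)=g(e_1,e_p\cdot e_q),
\]
and since $e_p\cdot e_q=e_{p+q-1}$ when $p+q-1\les n$ and vanishes otherwise, this is exactly $g_{pq}=g_{1,p+q-1}$ in range and $g_{pq}=0$ out of range (the lower bound $1\les p+q-1$ being automatic). For the converse I would verify directly that a form of this Hankel type satisfies \eqref{cond1}: both $g(e_i\cdot e_j,e_k)$ and $g(e_i,e_k\cdot e_j)$ collapse to $g_{1,i+j+k-2}$ when $i+j+k-2\les n$ and to $0$ otherwise, the quantity $i+j+k-1$ being the invariant index sum. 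The one point that genuinely requires care here is the bookkeeping of the truncation, i.e.\ checking that whenever $e_i\cdot e_j$ or $e_k\cdot e_j$ falls out of range the matching expression on the other side also vanishes; I would dispatch this with the uniform bounds $i+j-1\les i+j+k-2$ and $k+j-1\les i+j+k-2$, valid since $i,k\me 1$.

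Non-degeneracy is then immediate from the Hankel shape. The matrix $(g_{ij})$ has $g_{ij}=g_{1,i+j-1}$ for $i+j\les n+1$ and $g_{ij}=0$ for $i+j>n+1$, so it is anti-triangular with every entry on the main anti-diagonal $\{i+j=n+1\}$ equal to $g_{1n}$. A term $\operatorname{sgn}(\sigma)\prod_i g_{i,\sigma(i)}$ of the determinant can be nonzero only if $i+\sigma(i)\les n+1$ for every $i$; summing over $i$ gives $n(n+1)$ on both sides, forcing $\sigma(i)=n+1-i$. Hence only the reversal permutation survives and $\det(g_{ij})=\pm\,g_{1n}^{\,n}$, which is nonzero precisely when $g_{1n}\neq 0$, giving the stated non-degeneracy criterion.

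Finally, for the skew-symmetric forms I would apply the identical substitution to \eqref{cond3}: evaluating at $a=e_1$ yields $f_{pq}=f_{1,p+q-1}$ in range and $f_{pq}=0$ otherwise, so $f_{pq}$ depends only on the index sum $p+q$ and is therefore \emph{symmetric}, $f_{pq}=f_{qp}$. Combined with skew-symmetry $f_{pq}=-f_{qp}$ over $\Cm$, this forces $f_{pq}=0$ for all $p,q$. I would remark that \eqref{cond3} alone already annihilates $f$, so the cyclic condition \eqref{cond4} is not even needed. I expect no serious obstacle in the whole argument: everything hinges on the presence of the unit $e_1$, and the only delicate step is the truncation bookkeeping in the converse for the symmetric case.
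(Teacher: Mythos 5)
Your proof is correct and is exactly the ``straightforward'' verification the paper alludes to but does not write out: the identification $\Tm_n\cong\Cm[t]/(t^n)$, the use of the unit $e_1$ in \eqref{cond1} and \eqref{cond3} to force the Hankel shape (respectively to force $f$ to be symmetric and hence zero), and the anti-triangular determinant computation are all sound, including the truncation bookkeeping in the converse direction. The only cosmetic slips are the transposition $f_{pq}$ versus $f_{qp}$ in the skew-symmetric step (harmless, since either way the form depends only on $p+q$) and that you actually prove the stronger statement that non-degeneracy holds \emph{iff} $g_{1n}\neq 0$.
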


\noindent The proof is straightforward. Since the algebra $\Tm_n$ is abelian the only condition for bilinear forms generating the central extensions of the Lie algebra $\Liea_{\Tm_n}$ is the quasi-Frobenius condition \eqref{cond1}.

As consequence of the above proposition the most general bilinear forms generating the Poisson pair \eqref{pair} are given by
\eqq{
	(\g_0)_{ij} = g_{i+j-1},\quad (\h_0)_{ij} = h_{i+j-1},\quad (\g_1)_{ij} = \alpha_{i+j-1},\quad (\h_1)_{ij} = \beta_{i+j-1}
}
for $1\les i+j-1\les n$ and $0$ otherwise. Moreover $(\f_0)_{ij} = (\f_1)_{ij} = 0$.

%\newpage

The element $c=(\delta^i_1)\in\Tm_n$ is a unity. For $v=(v^i)\in\Liea_{\Tm_n}$ the equation \eqref{eq} has the following system
\eq{\label{trsl1}
\begin{split}
	g_{i+j-1}v^j_t + h_{i+j-1}v^j_{xxt} &=
	\biggl (\sum_{k=1}^{j}\Bigl (\frac{3}{2}g_{i+j-1}v^kv^{j-k+1} + \frac{1}{2} h_{i+j-1}v^k_x\,v^{j-k+1}_x
	+ h_{i+j-1}v^k v^{j-k+1}_{xx}   \Bigr)\\
	&\qquad +\alpha_{i+j-1}v^j + \beta_{i+j-1}v^j_{xx}\biggr )_x\,,
\end{split}
}
where $1\les i\les n$ and the summation for $j$ is over $1\les j\les n-i+1$. The bilinear form $\h$ is
non-degenerate if $h_n\neq 0$. In this case it is allowed to take $\g_0=0$.

Consider the special case given by
\eqq{
	(\g_0)_{ij} = g \delta^n_{i+j-1},\quad (\h_0)_{ij} = h \delta^n_{i+j-1},\quad (\g_1)_{ij} = \alpha \delta^n_{i+j-1},\quad (\h_1)_{ij} = \beta \delta^n_{i+j-1}.
}
Then, the above system takes the form
\eq{\label{trsl2}
\begin{split}
	g v^i_t + h v^i_{xxt} &=
	\biggl (\sum_{k=1}^{i}\Bigl (\frac{3}{2}g v^kv^{i-k+1} + \frac{1}{2} h v^k_x\,v^{i-k+1}_x
	+ h v^k v^{i-k+1}_{xx}   \Bigr) +\alpha v^i + \beta v^i_{xx}\biggr )_x\,,
\end{split}
}
where $1\les i\les n$.
	
The above systems \eqref{trsl1} and \eqref{trsl2} are triangular: with the only one genuinely nonlinear equation for $i=n$ and $i=1$, respectively, the remaining equations are sequentially linear.
In fact all triangular systems associated with low-dimensional algebras posses similar property.

\section{An $n$-dimensional Novikov algebra $\Alg_n$}\label{san}

Consider the following $n$-dimensional algebra $\Alg_n$, this being a generalisation of the Novikov algebras $(N4)$
and $(C8)$, defined by the rule
\eqq{%\label{An}
		(a\cdot b)^i := a^i b^n\quad\iff\quad e_i\cdot e_j = \delta_i^k\delta_j^n\,e_k,
}
so the structure constants are given by $b^k_{ij} = \delta_i^k\delta_j^n$.

\begin{proposition}\label{pbf}
For any dimension $n$ the algebra $\Alg_n$ is an associative Novikov algebra. If $n\me 2$ it is non-abelian. The associated Lie algebra structure on $\Alg_n$ is non-nilpotent. Moreover:
\begin{itemize}
\item an arbitrary symmetric bilinear form $\g$ on $\Alg_n$ satisfies the quasi-Frobenius condition \eqref{cond1};
\item an anti-symmetric bilinear form $\f = (f_{ij})$ on $\Alg_n$ satisfies the conditions \eqref{cond3}
and \eqref{cond4} if and only if
\eqq{
			f_{ij} = 0\quad\text{for}\quad i\neq n\ \text{and}\ j\neq n\,;
}
\item a symmetric bilinear form $\h = (h_{ij})$ on $\Alg_n$ satisfies the conditions \eqref{cond4} if and only if
\eqq{
		h_{ij} = 0\quad\text{for}\quad i\neq n\ \text{or}\ j\neq n\,.
}
\end{itemize}
\end{proposition}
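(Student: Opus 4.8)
The plan is to exploit the fact that the product of $\Alg_n$ collapses to the one-line form $a\cdot b = b^n a$, i.e. the right multiplication $R_b = b^n\,\id$ acts as a scalar while the left multiplication $L_a\colon b\mapsto b^n a$ has rank one. Every assertion then becomes a short computation. First I would dispose of the algebraic axioms. Associativity is immediate, since $(a\cdot b)\cdot c = b^n c^n a = a\cdot(b\cdot c)$; thus the associator vanishes and the left-symmetry (the second identity in \eqref{cond}) holds trivially, while the right-symmetry \eqref{conda} reads $(a\cdot b)\cdot c = b^n c^n a = (a\cdot c)\cdot b$ and so also holds. Hence $\Alg_n$ is an associative Novikov algebra. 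For $n\me 2$ it is non-abelian, as witnessed by $e_1\cdot e_n = e_1$ against $e_n\cdot e_1 = 0$ (because $e_1^{\,n}=0$).

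Next I would handle the Lie structure. From $[e_i,e_j] = e_i\cdot e_j - e_j\cdot e_i = \delta_j^n e_i - \delta_i^n e_j$ one reads off that $\ad_{e_n}$ acts as $-\id$ on $\spn(e_1,\dots,e_{n-1})$ and annihilates $e_n$. Since $\ad_{e_n}$ has the nonzero eigenvalue $-1$, it is not nilpotent, so by definition the associated Lie algebra is non-nilpotent.

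For the bilinear forms, the decisive first simplification is that the quasi-Frobenius relations \eqref{cond1}, \eqref{cond3} and the first equality in \eqref{cond2} are automatic for \emph{every} bilinear form, because both $g(a\cdot b,c)$ and $g(a,c\cdot b)$ equal $b^n g(a,c)$. This settles the first bullet outright and leaves, for the antisymmetric form, only the cyclic relation \eqref{cond4} to analyse. Evaluating \eqref{cond4} on basis vectors gives
\[
\delta_j^n f_{ik} + \delta_k^n f_{ji} + \delta_i^n f_{kj} = 0 ,
\]
and I would then split according to how many of $i,j,k$ equal $n$: with none equal there is nothing to check; with exactly one equal to $n$, each of the three sub-cases forces $f_{ik}=0$ for $i,k\neq n$; with two or three indices equal to $n$ the relation collapses to a sum of the form $f_{pn}+f_{np}$, which vanishes automatically by antisymmetry. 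Hence \eqref{cond4} is equivalent to $f_{ij}=0$ for $i\neq n$ and $j\neq n$.

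Finally, for the symmetric form generating third-order cocycles, the only nontrivial constraint is the second equality in \eqref{cond2}, namely $h(a,b\cdot c)=h(a,c\cdot b)$, which on basis vectors becomes $\delta_k^n h_{ij} = \delta_j^n h_{ik}$. Taking $j=n$, $k\neq n$ gives $h_{ik}=0$ for all $k\neq n$, and taking $k=n$, $j\neq n$ gives $h_{ij}=0$ for all $j\neq n$; combined with the symmetry $h_{ij}=h_{ji}$, this leaves only $h_{nn}$ possibly nonzero, i.e. $h_{ij}=0$ whenever $i\neq n$ or $j\neq n$. I do not expect a genuine obstacle: the whole proposition is driven by the single identity $a\cdot b=b^n a$, and the only point requiring care is the index bookkeeping in the boundary cases of \eqref{cond4} and \eqref{cond2}, where it is precisely the symmetry type of the form (antisymmetric versus symmetric) that produces the two different answers (``$i\neq n$ and $j\neq n$'' versus ``$i\neq n$ or $j\neq n$'').
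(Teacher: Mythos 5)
Your proposal is correct and is precisely the ``straightforward calculation'' that the paper declines to write out: everything reduces to $a\cdot b=b^{n}a$, the quasi-Frobenius identities are automatic for any bilinear form, and the case analysis of the cyclic condition \eqref{cond4} and of the second identity in \eqref{cond2} (which is what the proposition's reference to \eqref{cond4} in the third bullet must mean) gives exactly the stated vanishing patterns. The only caveat, inherited from the statement rather than your argument, is that the non-nilpotency of the associated Lie algebra (via the eigenvalue $-1$ of $\ad_{e_n}$ on $\spn(e_1,\dots,e_{n-1})$) requires $n\me 2$.
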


\noindent The proof is a straightforward calculation and will not be presented.

\begin{remark}
Let $\smf{\Si}$ be the space of smooth functions on the circle $\Si$ and
$\Vect{\Si}$ be the Lie algebra of vector fields $u(x)\pr_x$ on $\Si$.
The Lie bracket in $\Vect{\Si}$ is given by the formula
\eqq{
	\brac{u\,\pr_x,v\,\pr_x} = (uv_x - u_x v)\pr_x\,,\qquad x\in\Si.
}
Consider the semidirect product $\mathscr{G}_n(\Si):= \Vect{\Si}\ltimes \smf{\Si}^{\oplus n}$
of $\Vect{\Si}$ with $n$ copies of $\smf{\Si}$. The Lie bracket in $\mathscr{G}_n(\Si)$
is given by
\eqq{
	\brac{(u\pr_x, \ff),(v\pr_x, \gg)} = \bra{(uv_x - u_x v)\pr_x, u\, \gg_x - v \ff_x},
}
where $(u\pr_x, \ff),(v\pr_x, \gg)\in\mathscr{G}_n(\Si)$.
The Lie algebra $\Liea_{\Alg_n}$ associated with
the $n$-dimensional Novikov algebra $\Alg_n$ is isomorphic to $\mathscr{G}_{n-1}(\Si)$.
The isomorphism $\mathscr{G}_{n-1}(\Si)\arrow\Liea_{\Alg_n}$ is given by $(u\pr_x, \ff)\map - (\ff, u)$.
The particular algebra $\mathscr{G}_{1}(\Si)$, with related integrable systems, was extensively studied,
see \cite{M, MOR, G, GO} and references therein.
\end{remark}

Proposition~\ref{pbf} provides conditions on the bilinear forms generating the central extension
of the Lie algebra $\Liea_{\Alg_n}$. Consequently the most general bilinear forms generating the Poisson pair \eqref{pair} are given by
\eqq{
	(\g_0)_{ij} = g_{ij},\qquad (\f_0)_{ij} = \delta_j^n f_i - \delta_i^n f_j,\qquad (\h_0)_{ij} = \delta_i^n\delta_j^n h
}
and
\eqq{
	(\g_1)_{ij} = \alpha_{ij},\qquad (\f_1)_{ij} = \delta_j^n \gamma_i - \delta_i^n \gamma_j,\qquad (\h_1)_{ij} = \delta_i^n\delta_j^n \beta.
}
The element $c=(c^i)\in\Alg_n$ is the right unity iff $c^n=0$. Hence it is not unique. For $v=(v^i)\in\Liea_{\Alg_n}$ the equation \eqref{eq} has the following form:
\eq{\label{ms}
\begin{split}
i\neq n:\qquad\qquad  g_{ij}v^j_t + f_iv^n_{xt} &= \bra{g_{ij} v^jv^n + f_i v^nv^n_x + \alpha_{ij}v^j + \gamma_iv^n_{x}}_x\,,\\
i=n:\qquad  g_{nj}v^j_t - f_jv^j_{xt} + hv^n_{xxt} &= \Bigl ( g_{nj} v^jv^n + \frac{1}{2}g_{jk}v^jv^k - f_j v^j_xv^n
+ \frac{1}{2} h (v^n_x)^2\\
&\qquad +h v^nv^n_{xx} +  \alpha_{nj}v^j - \gamma_jv^j + \beta v^n_{xx} \Bigr )_x\, .
\end{split}
}
For $n=2$ this system is equivalent to \eqref{n2}.

\begin{example} Consider the particular choice:\footnote{Here the matrices are of dimension $\{n-2,1,1\}\times \{n-2,1,1\}$.}
\eqq{
\g_0 = \pmatrx{{\rm \bm{Id_{n-2}}} & 0 & 0\\ 0 & 0 & 1\\ 0 & 1 & 0},\qquad
\g_1 = \pmatrx{\bm{\alpha} & 0 & 0\\ 0 & \alpha_{n-1} & 0\\ 0 & 0 & \alpha_n},
}
where ${\rm \bm{Id}_{n-2}} = \diag(1,\ldots,1)$ and $\bm{\alpha} = \diag(\alpha_1,\ldots,\alpha_{n-2})$.
Let $\ff = (f_1,\ldots, f_{n-2})^\T$ and $\bg = (\gamma_1,\ldots, \gamma_{n-2})^\T$.
Then, the system \eqref{ms} for $v=(\q,w,u)^\T\in\Liea_{\Alg_n}$ takes the form
\eqq{
\q_t + u_{xt}\ff &= \bra{u\q + uu_x\ff + \bm{\alpha}\q + u_x\bg}_x\,,\\
u_t + f_{n-1} u_{xt} &= \bra{u^2 + f_{n-1}uu_x + \alpha_{n-1}w + \gamma_{n-1}u_x}_x\,,\\
w_t - \ff^\T\q_{xt} - f_{n-1}w_{xt} + h u_{xxt} &= \Bigl (2uw + \frac{1}{2}\q^\T\q
-u\ff^\T\q_x - f_{n-1}uw_x + \frac{1}{2} h u_x^{\,2}\\
&\qquad +h u u_{xx}+ \alpha_{n}u - \bg^\T\q_x - \gamma_{n-1}w_x + \beta u_{xx}\Bigr )_x\,.
}
For $\ff={\bf 0}$ and $f_{n-1}=h=0$ it reduces to the system
\eqq{
\q_t  &= \bra{u\q  + \bm{\alpha}\q + u_x\bg}_x\,,\\
u_t  &= \bra{u^2  + \alpha_{n-1}w + \gamma_{n-1}u_x}_x\,,\\
w_t  &= \Bigl (2uw + \frac{1}{2}\q^\T\q + \alpha_{n}u - \bg^\T\q_x - \gamma_{n-1}w_x + \beta u_{xx}\Bigr )_x\,,
}
which for $\bm{\alpha}={\bf 0}$, $\bg = 0$, $\alpha_n=\beta=0$ and $\alpha_{n-1} =2$, $\gamma_{n-1} = -1$
is the multicomponent dispersive water wave equation considered in \cite{Kup2}.
\end{example}

\begin{example}\label{ex64}
Finally, consider the case of $\g_0 = \diag(1,\ldots,1)$ and $\g_1=0$. Let $\ff = (f_1,\ldots, f_{n-1})^\T$ and $\bg = (\gamma_1,\ldots, \gamma_{n-1})^\T$. Then, taking $v=(\uf,w)^\T\in\Liea_{\Alg_n}$ we get the following multicomponent system
\eqq{
\uf_t + w_{xt}\ff &= \bra{w\uf + ww_x\ff + w_x\bg}_x\,,\\
w_t - \ff^\T\uf_{xt} + h w_{xxt} &= \bra{\frac{3}{2}w^2 + \frac{1}{2}\uf^\T\uf
-w\ff^\T\uf_x + \frac{1}{2} h w_x^{\,2} +h w w_{xx} - \bg^\T\uf_x  + \beta w_{xx}}_x\,,
}
which for $\ff=\bg=0$  reduces to
\eqq{
\uf_t  &= \bra{w\uf}_x\,,\\
w_t + h w_{xxt}  &= \bra{\frac{3}{2}w^2 + \frac{1}{2}\uf^\T\uf   + \frac{1}{2} h w_x^{\,2} +h w w_{xx} + \beta w_{xx}}_x\,.
}
Similarly to Section~\ref{exx} and Remark~\ref{rem2}, this system for $\beta=0$, after the change of dependent variables: $q_1= w + h w_{xx}$, $q_i = \sum_{j=1}^{n-i+1}u_j^{\,2}$, where $i=2,\ldots,n$, is equivalent to the system that one can obtain from CH$(n,1)$  \cite{HI} by scalings $q_{n-i+1}\map \mu_i q_{n-i+1}$, for $i=1\,,\ldots\,,n-2$, followed by the limits $\mu_i\map 0$. There is much scope for the investigation of the links between seemingly different systems via such nonlinear changes of variables and scalings.
\end{example}

\section{Equations of Hydrodynamic type on Novikov algebras}

Taking the dispersionless limit of the evolution equations obtained in Theorem~\ref{thr} gives, in the coordinates $u$, the following
equations of hydrodynamic type:
\eq{
\notag	u_{t_1} &= R^*_c u_x\,,\\
\label{hyde}	 u_{t_2} &=  (R^*_{R_c\La^{-1} u} u)_x + L^*_{\La^{-1}u_x}R^*_cu + \g_1 R_c\La^{-1}u_x\,,
}
where the inertia operator is given by $\Lambda \equiv \g_0$. These are the first two equations from the bi-Hamiltonian hierarchy \eqref{uh} generated by means of the following compatible Poisson operators:
\eqq{
	\P_1\gamma = (R^*_\gamma u)_x + L^*_{\gamma_x}u + \g_1 \gamma_x,\qquad \P_0\gamma = \g_0 \gamma_x,
}
where $\gamma\in\Liea_\Alg$. The first three densities of Hamiltonian functionals are
\eqq{
	H_0 &= g_0(c,v),\\
	H_1 &= \frac{1}{2}\,g_0(v,v\cdot c), \\
	H_2 &= \frac{1}{3}\,g_0\bra{v,v\cdot(v\cdot c)} +\frac{1}{6}\,g_0\bra{v\cdot c,v\cdot v} + \frac{1}{2}\,g_1(v,v\cdot c),
}
where $v = \g_0^{-1}u$. These simplify further when the algebra has a right unity $e$ and one takes $c=e\,,$ or if the algebra has a left unity, in which the
algebra is automatically associative.
For all the explicit examples of Novikov algebras and bilinear forms (with $\det g_0\neq 0$) discussed in this paper the associated Haantjes tensor vanishes.
But only
those systems associated to the algebras $(N4)$, $(C8)$ and $\Alg_n$ are hyperbolic and thus are diagonalisable (see Appendix~\ref{Ahyd}).

Let us consider the dispersionless limit of the $n$-component equation \eqref{ms}. Thus, for $u= (u^i)\in\Liea^*_{\Alg_n}$
such that $(u^i)=(g_{ij}v^j)$ we obtain\footnote{We slightly abuse here the convention previously adopted.}
\eq{\label{hyds}
	u^i_t  = \bra{\eta_{nj} u^j u^i + \alpha^{ij}\eta_{jk}u^k}_x\,,\qquad i=1,\ldots,n\ ,
}
where $\eta = g^{-1}$, such that $(\eta_{ij}) = (g^{ij})$, and $(\alpha^{ij})\equiv(\alpha_{ij})$.

\begin{proposition}
The Haantjes tensor \eqref{haant} of the equations of hydrodynamic type \eqref{hyds} is identically zero.
\end{proposition}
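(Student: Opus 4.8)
The plan is to reduce the whole statement to a pointwise linear-algebra identity. Write the flux of \eqref{hyds} as $F^i = w\,u^i + A^i_k u^k$, where $w := \eta_{nk}u^k$ is linear in $u$, $b_j := \eta_{nj}$ is a constant covector, and $A^i_j := \alpha^{ik}\eta_{kj}$ is a constant matrix. Differentiating, the velocity of the hydrodynamic-type system $u^i_t = V^i_j u^j_x$ is
\[
V^i_j = \frac{\partial F^i}{\partial u^j} = w\,\delta^i_j + u^i b_j + A^i_j,
\]
so as an endomorphism $V = w\,\mathrm{Id} + u\,b^{\T} + A$: a scalar multiple of the identity, plus a rank-one term, plus a constant matrix. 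I use the natural pairing $(\cdot,\cdot)$ of covectors and vectors, the standard Nijenhuis tensor $N^i_{jk} = V^p_j\partial_p V^i_k - V^p_k\partial_p V^i_j - V^i_p(\partial_j V^p_k - \partial_k V^p_j)$, and the Haantjes tensor in its standard normalisation.

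First I would record that the Haantjes tensor is obtained from $N$ by a purely algebraic (derivative-free) operation, $H(X,Y) = V^2 N(X,Y) - V N(VX,Y) - V N(X,VY) + N(VX,VY)$, so once $N$ is known it suffices to verify a tensor identity at each fixed point $u$. Next I would compute $N$. Since $\partial_p V^i_j = b_p\delta^i_j + \delta^i_p b_j$, the quantity $\partial_j V^p_k = b_j\delta^p_k + \delta^p_j b_k$ is symmetric in $(j,k)$, so the antisymmetrised term $\partial_j V^p_k - \partial_k V^p_j$ of the Nijenhuis tensor vanishes. A short calculation (the rank-one contributions $u^i b_j b_k$ cancel) then gives the compact two-term form
\[
N(X,Y) = (e,X)\,Y - (e,Y)\,X + (b,Y)\,AX - (b,X)\,AY,\qquad e := w\,b + d,\quad d_j := b_p A^p_j.
\]

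Substituting this $N$ into the algebraic Haantjes formula is the computational heart. The decisive point is that all four terms carrying a factor $(e,\cdot)$ cancel in pairs, so $e$ — the only $u$-dependent covector in $N$ — drops out entirely, leaving
\[
H(X,Y) = (b,Y)\,V[V,A]X - (b,X)\,V[V,A]Y + (b,VX)\,[V,A]Y - (b,VY)\,[V,A]X.
\]
Now I would feed in three identities that follow directly from the structure of $V$: the commutator sees only the rank-one part, $[V,A] = [u\,b^{\T},A] = u\,d^{\T} - (Au)\,b^{\T}$, whence $[V,A]Z = (d,Z)\,u - (b,Z)\,Au$; the pulled-back covector is $b^{\T}V = 2w\,b + d$, whence $(b,VX) = 2w\,(b,X) + (d,X)$; and $Vu = 2w\,u + Au$ (using $(b,u)=w$). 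Inserting these expresses $H(X,Y)$ as a combination of the three vectors $u$, $Au$, $A^2u$, with scalar coefficients built from $(b,X),(b,Y),(d,X),(d,Y),w,(d,u)$. Collecting the coefficient of each of $u$, $Au$, $A^2u$ separately, every one vanishes identically — the $A^2u$ coefficient cancels at once, and the $u$- and $Au$-coefficients cancel using only the commutativity of scalar multiplication, e.g. $(b,X)(b,Y)=(b,Y)(b,X)$ — so $H\equiv 0$ irrespective of any linear dependence among $u,Au,A^2u$.

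The main obstacle is purely organisational: controlling the bookkeeping in the Haantjes substitution. The three features that render it routine are (i) that $H$ is algebraic in $V$ and $N$, reducing the problem to a fixed point; (ii) the automatic vanishing of $\partial_{j}V^p_{k}-\partial_{k}V^p_{j}$, which puts $N$ into the two-term form above; and (iii) the cancellation of all $e$-terms, after which only the commutator $[V,A]$ of $A$ with the rank-one part survives and the remainder collapses by symmetry of scalar products.
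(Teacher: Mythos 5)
Your proposal is correct and follows essentially the same route as the paper: compute the velocity matrix and its (constant) derivative, observe that the antisymmetrised term of the Nijenhuis tensor drops out, obtain exactly the same two-term expression for $N$ (the paper's index formula $N^i_{jk} = \alpha^{ir}\eta_{rj}\eta_{nk} + \eta_{nr}\alpha^{rs}\eta_{sj}\delta^i_k + \eta_{nj}\eta_{nr}u^r\delta^i_k - \{j\leftrightarrow k\}$ is precisely your $(e,X)Y-(e,Y)X+(b,Y)AX-(b,X)AY$), and then substitute into the Haantjes formula. The only difference is that the paper dismisses the final substitution as ``straightforward, but slightly tedious,'' whereas your invariant bookkeeping --- the vanishing of the $(e,\cdot)$-part, the reduction to the rank-two commutator $[V,A]$, and the term-by-term cancellation of the coefficients of $u$, $Au$, $A^2u$ --- actually carries that step out, and I have checked that each of those cancellations holds.
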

\begin{proof}
We have
\eqq{
	A^i_k = \eta_{nj}u^j\delta^i_k + u^i \eta_{nk} + \alpha^{ij}\eta_{jk}\quad\Longrightarrow\quad
	\pd{A^i_k}{u^j} =  \eta_{nj}\delta^i_k + \eta_{nk}\delta^i_j
}
such that \eqref{hyd}. Hence, the Nijenhuis tensor is given by
\eqq{
N^i_{jk} = \alpha^{ir}\eta_{rj}\eta_{nk} + \eta_{nr}\alpha^{rs}\eta_{sj}\delta^i_k +
\eta_{nj}\eta_{nr}u^r\delta^i_k - \{j\leftrightarrow k\}.
}
Then, after straightforward, but slightly tedious, calculations one can show that the Haantjes
tensor~\eqref{haant} vanishes.
\end{proof}

\noindent It turns out that the eigenvalues/characteristic speeds of the system \eqref{hyds} are, when $n=2\,,3$ or $4\,,$
distinct. We conjecture that this is true for arbitrary sized systems.

In fact, as remarked above, all the dispersionless systems constructed {\sl from the explicit Novikov algebras} in this
paper have vanishing Haantjes tensor. This leads us to the following:

\begin{conjecture} The equations of hydrodynamic type \eqref{hyde} constructed from an arbitrary Novikov algebra have vanishing Haantjes tensor.
\end{conjecture}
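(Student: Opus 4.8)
The plan is to trace the vanishing of the Haantjes tensor back to the bi-Hamiltonian origin of \eqref{hyde}, rather than to repeat for a general algebra the direct substitution used in the preceding Proposition. Since the Haantjes tensor is a genuine $(1,2)$-tensor, its vanishing is invariant under linear changes of coordinates, so I would first pass to the flat coordinates $v=\g_0^{-1}u$, in which $\P_0=\g_0\,\pr_x$ has constant metric. The dispersionless limit of the second flow in Theorem~\ref{thr} then reads $v_{t_2}=A\,v_x$ with the endomorphism field
\eqq{
A = R_{v\cdot c} + L_v R_c + \g_0^{-1}L^*_{v\cdot c}\,\g_0 + \g_0^{-1}\g_1 R_c\,,
}
whose entries are affine in $v$; the goal is to show that the Haantjes tensor \eqref{haant} of this $A$ vanishes identically for every Novikov algebra $\Alg$ and every admissible $\g_0,\g_1$.

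The structural input I would exploit is that $A$ is self-adjoint with respect to two compatible metrics. From the chain \eqref{uh}, $u_{t_2}=\P_0\delta_u\H_2=\P_1\delta_u\H_1$; reading off the first equality with $\P_0=\g_0\pr_x$ forces $\g_0A$ to be the Hessian of the potential $H_2$, hence symmetric, so $A$ is $\g_0$-self-adjoint, while the second equality together with the hydrodynamic-type form \eqref{po} of $\P_1$ shows in the same way that $A$ is self-adjoint also for the second, $v$-dependent metric supplied by $\P_1$, whose leading coefficient is $c^{ij}_k v^k$ shifted by the constant cocycle $\g_1$. The compatibility of $\P_0$ and $\P_1$ — automatic in \eqref{pair} by additivity of $1$-cocycles — is, for Dubrovin--Novikov operators, precisely the statement (already used in Section~\ref{sec2} to read $c^{ij}_k u^k+g_0^{ij}$ as a flat metric) that the two contravariant metrics form a flat pencil; consequently the pencil operator $V$ built as their ratio is a Nijenhuis operator, i.e. its Nijenhuis torsion vanishes. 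Finally, a short matrix computation from the two self-adjointness relations gives $AV=VA$.

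The conclusion would then follow from the theory of Haantjes/Nijenhuis operators: a Nijenhuis operator has integrable generalized eigendistributions, and an operator commuting with $V$ whose eigenspaces are subordinate to those of $V$ inherits integrable generalized eigendistributions, which by Haantjes's theorem is equivalent to the vanishing of its Haantjes tensor. (Note that $A$ itself need \emph{not} be Nijenhuis — indeed the preceding Proposition exhibits a nonzero Nijenhuis tensor — so it is essential that only the weaker Haantjes condition is asserted.) I expect the decisive obstacle to lie exactly at this last step, in the non-generic algebras: the clean passage from ``$A$ commutes with the Nijenhuis operator $V$'' to ``$A$ has integrable eigendistributions'' is transparent only when $V$ is semisimple with distinct characteristic speeds, whereas the conjecture claims vanishing of \eqref{haant} for \emph{every} Novikov algebra, including those whose spectrum of $V$ is degenerate on an open set. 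For algebras in which $V$ is generically regular one closes the gap by continuity, the vanishing of \eqref{haant} being a polynomial identity in $v$ that need only be checked on the dense semisimple locus; the genuinely everywhere-degenerate algebras, however, seem to require reverting to the direct route — substituting the explicit $A$ above into \eqref{haant} and reducing the outcome to an identity in the structure constants by means of the Novikov relations \eqref{cond} and the (quasi-)Frobenius conditions \eqref{rel2} for $\g_0$ and $\g_1$ — and it is the uniform control of this reduction across all dimensions $n$ that keeps the statement at the level of a conjecture.
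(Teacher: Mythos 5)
The statement you are proving is labelled a \emph{conjecture}, and the paper offers no proof of it: the authors explicitly state that they were unable to prove it in general, and they only verify it (i) for commutative Novikov algebras, where the Nijenhuis tensor of $A$ already vanishes, and (ii) by direct computation for the explicit algebras appearing in the paper (e.g.\ the Proposition for the system \eqref{hyds} built on $\Alg_n$). So there is no ``paper proof'' to match your argument against; what you have written is a genuinely different, structural strategy, and the honest question is whether it closes the conjecture. It does not, but the sound parts are worth recording. Your identification of $A$ is correct, and the two self-adjointness statements are legitimate: in the flat coordinates of $\g_0$ one has $A=\g_0\cdot\mathrm{Hess}(H_2)$, so $A$ is $\g_0$-self-adjoint, and the covariant Hessian argument applied to $\P_1$ gives self-adjointness with respect to the second (affine-in-$u$) metric; the matrix identity $AV=VA$ for $V=(\text{metric of }\P_1)\,\g_0^{-1}$ then follows exactly as you say. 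For algebras where $V$ has generically simple spectrum this would indeed yield the conjecture, by diagonalising $A$ in the holonomic eigenframe of $V$ and extending by continuity of the polynomial identity \eqref{haant}.

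The gaps are real, however, and they are not only the one you flag. First, the step ``flat pencil $\Rightarrow$ $V$ is a Nijenhuis operator'' (Mokhov's almost-compatibility) is stated for nondegenerate metric pencils, whereas here the second metric $c^{ij}_{\,k}u^k+g_1^{ij}$ is degenerate for a large portion of the classified algebras (this is precisely why the paper must invoke Grinberg's result to keep $\P_1$ Poisson); you would need a version of that implication valid for a pencil with one degenerate leg before even reaching the commutation argument. Second, and more decisively, the passage from ``$A$ commutes with a Nijenhuis operator $V$'' to ``$H_A=0$'' has no general theorem behind it once $V$ has eigenvalues of constant multiplicity $>1$ on an open set: commutation then only constrains $A$ block-wise, the blocks need not be simultaneously integrably diagonalisable, and the continuity/density argument is unavailable because the regular locus is empty. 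Since the conjecture is asserted for \emph{every} Novikov algebra, and low-dimensional examples with everywhere-degenerate $V$ do occur, your own concluding assessment is accurate: the argument proves a special case and reduces the rest to the same direct verification in the structure constants that the paper also could not carry out uniformly. The statement therefore remains, on your approach as on the paper's, a conjecture.
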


\noindent In the case of a commutative Novikov algebra (which are automatically associative) the conjecture is true: in this
case the Nijenhuis tensor vanishes automatically. We have not been able to prove this conjecture in general, even for
those Novikov algebras which arise from a derivation $\partial$ on a commutative, associative algebra (i.e. $a\cdot b=a \partial b$).
We end by noting that the {\sl explicit} construction of flat coordinates for the (inverse) flat metric $g^{ij}(u)=c^{ij}_{~~k} u^k + g_0^{ij}$ with $\det (g^{ij}(u))\neq 0\,$
for a non-commutative Novikov algebra, a problem stated in Balinskii and Novikov \cite{BN} nearly 30 years ago, remains unsolved.

\bigskip

\section*{Acknowledgement}

The authors wishes to thank the anonymous referee for providing to our attention the reference~\cite{HI}
and some important comments on the connections between various Camassa-Holm type systems via nonlinear changes of variables.

\bigskip

\appendix

\section{}

Let $(\alg,[\,,\,])$ be a Lie algebra, $\alg^*$ its (regular) dual space and $\dual{\,,\,}:\alg^*\times\alg\arrow\Cm$ the standard duality pairing. The coadjoint  action $\ad^*:\alg\times\alg^*\arrow\alg^*$ is defined through the formula
\eq{\label{coad}
\dual{\ad^*_a u, b} := -\dual{u,\brac{a,b}},
}
where $a,b\in\alg$ and $u\in\alg^*$.

\subsection{Lie-Poisson structure}\label{liepo}

The Lie-Poisson (or Kirillov-Kostant) bracket in the space of
smooth functions $\smf{\alg^*}$ on the dual Lie algebra $\alg^*$ is given by
\eqq{
\pobr{H, F}(u):=  \ddual{u, \brac{dF, dH}}\,,\qquad u\in\alg^*,
}
where $H, F\in\smf{\alg^*}$ and theirs differentials $dF,dG\in\alg$. The differential of $F\in\smf{\alg^*}$ is defined by the relation
\eq{\label{differ}
		\dual{v, dF}:=  \Diff{F[u+\eps v]}{\eps}{\eps=0},
}
where $v\in\alg^*$. The Hamiltonian equation corresponding to a function $H$ on $\alg^*$ with respect
to the Poisson-Lie structure has the form
\eqq{
		u_t = \P dH \equiv -\ad^*_{dH} u\,,\qquad u\in\alg^*,
}
where $\P$ is the Poisson operator such that $\pobr{H, F} =  \ddual{\P dH, dF}$.

\subsection{Central extension}\label{central}

A $2$-cocycle is a bilinear form $\omega:\alg\times\alg\map\Cm$, which
is skew-symmetric:
\eq{\label{ss}
\omega\bra{a,b} = -\omega\bra{b,a},
}
and it satisfies the cyclic-condition:
\eq{\label{cc}
\omega\bra{[a,b],c} + \omega\bra{[c,a],b} + \omega\bra{[b,c],a} = 0,
}
where $a,b,c\in\alg$. Then, the centrally extended algebra $\tilde{\alg} = \alg\oplus\Cm$ is defined via the Lie bracket
\eqq{
	\brac{(a,\alpha), (b,\beta)} = \bra{[a,b],\omega(a,b)},	
}
where $a,b\in\alg$ and $\alpha,\beta\in\Cm$.

Let the $2$-cocycle $\omega$ be defined by means of a $1$-cocycle $\phi:\alg \arrow \alg^*$, that is
\eq{\label{op}
	\omega\bra{a,b} := \dual{\phi(a),b}.
}
Then, the skew-symmetry \eqref{ss} is equivalent to skew-adjoint's of $\phi$, that is $\phi^\dag=-\phi$,
where $\dual{\phi^\dag(a),b} := \dual{\phi(b),a}$. The cyclic-condition \eqref{cc} impose the following relation on $\phi$:
\eqq{
    \phi\bra{[a,b]} = \ad^*_a\phi(b) - \ad^*_b\phi(a),
}
where $a,b\in\alg$.

The coadjoint action on the centrally extended Lie algebra $\tilde{\alg}$ restricted to $\alg$ is given by
\eqq{
	\widetilde{\ad}^*_a u = \ad^*_au - \delta\phi(a),
}
where $\delta\in\Km$ is a (fixed) charge. For our purposes it can be taken as $\delta=1$. Important is fact that more general cocycles can be obtained as linear compositions of respectively one or two-cocycles.

\subsection{Euler equation}\label{Euler} An invertible self-adjoint operator $\La:\alg\arrow\alg^*$ defining
the quadratic Hamiltonian
\eq{\label{qh}
		H = \frac{1}{2}\dual{u,\La^{-1}u}
} is called an inertia operator on $\alg$. The corresponding Hamiltonian equation on $\alg^*$ is
\eqq{
	u_t =  -\ad^*_{\La^{-1}u}u\,,\qquad u\in\alg^*,
}
since $dH = \La^{-1}u$. This equation is the Euler equation on the dual space $\alg^*$
corresponding to geodesic flows on a Lie Group $G$ associated with $\alg$. In fact the inertia
operator defines the metric $(\, ,\,)_e$ at the identity $e\in G$ such that
$(v ,w)_e= \dual{\La v, w}$, where $v,w\in\alg\equiv T_eG$. For more information on the subject we refer the reader to \cite{AK,KW}.

The Euler equation on the centrally extended Lie algebra $\tilde{\alg}$ for the quadratic Hamiltonian \eqref{qh}
takes the  form
\eqq{
	u_t &=  -\ad^*_{\La^{-1}u}u + \delta\phi\bra{\La^{-1}u},\\
	\delta_t &= 0,
}
where $(u,\delta)\in\tilde{\alg}^*$. The reduction of this Euler equation to $\alg^*$ is natural.

\subsection{Diagonalizability of hydrodynamic systems}\label{Ahyd}	

An evolution equation of hydrodynamic type:
\eq{\label{hyd}
		u^i_t = A^i_k(u)u^k_x\,,\qquad i=1,\ldots,n.
}
is diagonalizable if there are coordinates, so-called Riemann invariants, in which it can be presented in the form
\eqq{
	R^i_t = \Lambda_i(R)R^i_x\qquad \text{\it (no summation)}.
}
Then, if the characteristic speeds $\Lambda_i$ are mutually distinct and satisfy the semi-Hamiltonian condition, which is automatic for Hamiltonian hydrodynamic systems with nondegenerate metric, \eqref{hyd} is
integrable by means of the so-called generalised hodograph method \cite{T}.
We say that the system of hydrodynamic type \eqref{hyd} is hyperbolic if the tensor $A=\left(A^i_k(u)\right)$ has $n$ linearly independent eigenvectors. The diagonalizability criterion
for hyperbolic systems \eqref{hyd} is provided by
vanishing of the Haantjes tensor \cite{H}:
\eq{\label{haant}
%	 H^i_{jk} = A^i_rA^r_sN^s_{jk} + N^i_{rs} A^r_jA^s_k - A^i_rN^r_{sk} A^s_j -A^i_rN^r_{js} A^s_k,
H_A(X,Y)=N_A(AX,AY)-AN_A(X,AY)-AN_A(AX,Y)+A^2 N_A(X,Y)
}
which is defined via the Nijenhuis tensor:
\eqq{
%		N^i_{jk} = A^r_j\pd{A^i_k}{u^r} - A^r_k\pd{A^i_j}{u^r} + A^i_r\pd{A^r_j}{u^k} -A^i_r\pd{A^r_k}{u^j}.
N_A(X,Y)=[AX,AY]-A[X,AY]-A[AX,Y]+A^2[X,Y]\,.
}
More precisely, a hyperbolic system of hydrodynamic type \eqref{hyd} is diagonalizable if and only if the corresponding Haantjes tensor \eqref{haant} vanishes.

\section{}\label{class}

Here we present a classification of bilinear forms generating central extensions of the Lie algebras~$\Liea_{\Alg}$ (see Section~\ref{cene}). This classification is based on the classification of Novikov algebras in dimension $\les 4$ presented in \cite{BM1} and \cite{BG}.
We use the same names for the different types of Novikov algebras as used in \cite{BM1} and \cite{BG}.
The most general form of the bilinear forms generating
cocycles of order one, two and three, are denoted by $g$, $f$ and $h$ respectively.
Our classification extends that of \cite{BM1b}, where the symmetric bilinear forms satisfying the quasi-Frobenius condition were presented.
The characteristic matrix of a Novikov algebra $\Alg$ is  $\mathcal{B} = (b_{ij})$ defined by $b_{ij}:= e_i\cdot e_j = b^k_{ij}\,e_k$ according to \eqref{chm}. From the classification we exclude the Novikov algebras that are trivial, i.e. $b_{ij} = 0$,
or can be represented as a direct sum of lower dimensional Novikov algebras. In dimension four we only consider
non-transitive Novikov algebras.
The classification is presented in Tables~\ref{tab1}--\ref{tab6}.\footnote{The special cases $(A6)'$, $(C10)'$ were overlooked in \cite{BM1b}. The case $(D6)'$ must be distinguished because of the form~$f$. In \cite{BG} there is  a misprint in the multiplication table of the algebra $N^{\mathfrak{h}_1}_{20}$.}

\bigskip\bigskip
%\newpage

\arraycolsep=4.5pt
\setlength{\tabcolsep}{3.5pt}

{\footnotesize
\begin{table}[h!]
\caption{Classification of bilinear forms associated with one and two-dimensional Novikov algebras.}\label{tab1}
\begin{tabular}{ c  c  c  c  c  c }
\hline type & charact. matrix & $g$ & $f$ & $h$ & comments \\\hline\\[-10pt]

$\Cm$ & $e_1$& $g_{11}$ & $0$ & $h_{11}$ & \smallskip\\

$(T2)$ & $
\pmatrx{
 e_2 & 0 \\
 0 & 0 \\
}
$ & $
\pmatrx{
 g_{1 1} & g_{1 2} \\
 g_{1 2} & 0 \\
}
$ & $
\pmatrx{
 0 & 0 \\
 0 & 0 \\
}
$ & $
\pmatrx{
 h_{1 1} & h_{1 2} \\
 h_{1 2} & 0 \\
}
$ & transitive\smallskip\\

$(T3)$ & $
\pmatrx{
 0 & 0 \\
 -e_1 & 0 \\
}
$ & $
\pmatrx{
 0 & g_{1 2} \\
 g_{1 2} & g_{2 2} \\
}
$ & $
\pmatrx{
 0 & 0 \\
 0 & 0 \\
}
$ & $
\pmatrx{
 0 & 0 \\
 0 & h_{2 2} \\
}
$ & transitive\smallskip\\

$(N3)$ & $
\pmatrx{
 e_1 & e_2 \\
 e_2 & 0 \\
}
$ & $
\pmatrx{
 g_{1 1} & g_{1 2} \\
 g_{1 2} & 0 \\
}
$ & $
\pmatrx{
 0 & 0 \\
 0 & 0 \\
}
$ & $
\pmatrx{
 h_{1 1} & h_{1 2} \\
 h_{1 2} & 0 \\
}
$ & \smallskip\\

$(N4)$ & $
\pmatrx{
 0 & e_1 \\
 0 & e_2 \\
}
$ & $
\pmatrx{
 g_{1 1} & g_{1 2} \\
 g_{1 2} & g_{2 2} \\
}
$ & $
\pmatrx{
 0 & f_{1 2} \\
 -f_{1 2} & 0 \\
}
$ & $
\pmatrx{
 0 & 0 \\
 0 & h_{2 2} \\
}
$ & $\det f\neq 0$\smallskip\\

$(N5)$ & $
\pmatrx{
 0 & e_1 \\
 0 & e_1+e_2 \\
}
$ & $
\pmatrx{
 0 & g_{1 2} \\
 g_{1 2} & g_{2 2} \\
}
$ & $
\pmatrx{
 0 & 0 \\
 0 & 0 \\
}
$ & $
\pmatrx{
 0 & 0 \\
 0 & h_{2 2} \\
}
$ & \smallskip\\

$(N6)$ &
$
\pmatrx{
 0 & e_1 \\
 \kappa  e_1 & e_2 \\
}
$ & $
\pmatrx{
 0 & g_{1 2} \\
 g_{1 2} & g_{2 2} \\
}
$ & $
\pmatrx{
 0 & 0 \\
 0 & 0 \\
}
$ & $
\pmatrx{
 0 & 0 \\
 0 & h_{2 2} \\
}
$ & $\kappa\neq 0,1$ \smallskip\\ \hline

\end{tabular}
\end{table}
}

%\bigskip
\newpage

{\footnotesize
\begin{table}[h!]
\caption{Classification of bilinear forms associated with three-dimensional Novikov algebras of type $A$.}\label{tab2}
\begin{tabular}{ c  c  c  c  c  c }
\hline type & charact. matrix & $g$ & $f$ & $h$ & comments \\\hline\\[-10pt]

$(A3)$ & $
\pmatrx{
 0 & 0 & 0 \\
 0 & e_1 & 0 \\
 0 & 0 & e_1 \\
}
$ & $
\pmatrx{
 0 & 0 & 0 \\
 0 & g_{2 2} & g_{2 3} \\
 0 & g_{2 3} & g_{3 3} \\
}
$ & $
\pmatrx{
 0 & 0 & 0 \\
 0 & 0 & f_{2 3} \\
 0 & -f_{2 3} & 0 \\
}
$ & $
\pmatrx{
 0 & 0 & 0 \\
 0 & h_{2 2} & h_{2 3} \\
 0 & h_{2 3} & h_{3 3} \\
}
$ & $\begin{array}{c} {\rm transitive}\\ \det g = 0\end{array}$\smallskip\\

$(A4)$ & $
\pmatrx{
 0 & 0 & 0 \\
 0 & 0 & e_1 \\
 0 & e_1 & e_2 \\
}
$ & $
\pmatrx{
 0 & 0 & g_{2 2} \\
 0 & g_{2 2} & g_{2 3} \\
 g_{2 2} & g_{2 3} & g_{3 3} \\
}
$ & $
\pmatrx{
 0 & 0 & 0 \\
 0 & 0 & 0 \\
 0 & 0 & 0 \\
}
$ & $
\pmatrx{
 0 & 0 & h_{2 2} \\
 0 & h_{2 2} & h_{2 3} \\
 h_{2 2} & h_{2 3} & h_{3 3} \\
}
$ & transitive\smallskip\\

$(A5)$ & $
\pmatrx{
 0 & 0 & 0 \\
 0 & 0 & e_1 \\
 0 & -e_1 & 0 \\
}
$ & $
\pmatrx{
 0 & 0 & 0 \\
 0 & g_{2 2} & g_{2 3} \\
 0 & g_{2 3} & g_{3 3} \\
}
$ & $
\pmatrx{
 0 & 0 & 0 \\
 0 & 0 & f_{2 3} \\
 0 & -f_{2 3} & 0 \\
}
$ & $
\pmatrx{
 0 & 0 & 0 \\
 0 & h_{2 2} & h_{2 3} \\
 0 & h_{2 3} & h_{3 3} \\
}
$ & $\begin{array}{c} {\rm transitive}\\ \det g = 0\end{array}$ \smallskip\\

$(A6)$ & $
\pmatrx{
 0 & 0 & 0 \\
 0 & e_1 & e_1 \\
 0 & -e_1 & \kappa  e_1 \\
}
$ & $
\pmatrx{
 0 & 0 & 0 \\
 0 & g_{2 2} & g_{2 3} \\
 0 & g_{2 3} & g_{3 3} \\
}
$ & $
\pmatrx{
 0 & 0 & 0 \\
 0 & 0 & f_{2 3} \\
 0 & -f_{2 3} & 0 \\
}
$ & $
\pmatrx{
 0 & 0 & 0 \\
 0 & h_{2 2} & h_{2 3} \\
 0 & h_{2 3} & h_{3 3} \\
}
$ & $\begin{array}{c}\kappa \neq -1\\ {\rm transitive}\\ \det g = 0\end{array}$ \smallskip\\

$(A6)'$ & $
\pmatrx{
 0 & 0 & 0 \\
 0 & e_1 & e_1 \\
 0 & -e_1 & \kappa e_1 \\
}
$ & $
\pmatrx{
 0 & -g_{1 3} & g_{1 3} \\
 -g_{1 3} & g_{2 2} & g_{2 3} \\
 g_{1 3} & g_{2 3} & g_{3 3} \\
}
$ & $
\pmatrx{
 0 & 0 & 0 \\
 0 & 0 & f_{2 3} \\
 0 & -f_{2 3} & 0 \\
}
$ & $
\pmatrx{
 0 & 0 & 0 \\
 0 & h_{2 2} & h_{2 3} \\
 0 & h_{2 3} & h_{3 3} \\
}
$ & $\begin{array}{c}\kappa =-1\\ {\rm transitive}\end{array}$ \smallskip\\

$(A7)$ & $
\pmatrx{
 0 & 0 & 0 \\
 0 & 0 & e_1 \\
 0 & \kappa  e_1 & e_2 \\
}
$ & $
\pmatrx{
 0 & 0 & g_{2 2} \\
 0 & g_{2 2} & g_{2 3} \\
 g_{2 2} & g_{2 3} & g_{3 3} \\
}
$ & $
\pmatrx{
 0 & 0 & 0 \\
 0 & 0 & 0 \\
 0 & 0 & 0 \\
}
$ & $
\pmatrx{
 0 & 0 & 0 \\
 0 & 0 & h_{2 3} \\
 0 & h_{2 3} & h_{3 3} \\
}
$ & $\begin{array}{c}\kappa \neq 1\\ {\rm transitive}\end{array}$ \smallskip\\

$(A8)$ & $
\pmatrx{
 0 & 0 & 0 \\
 0 & 0 & 0 \\
 0 & e_1 & e_2 \\
}
$ & $
\pmatrx{
 0 & 0 & g_{1 3} \\
 0 & 0 & g_{2 3} \\
 g_{1 3} & g_{2 3} & g_{3 3} \\
}
$ & $
\pmatrx{
 0 & 0 & 0 \\
 0 & 0 & 0 \\
 0 & 0 & 0 \\
}
$ & $
\pmatrx{
 0 & 0 & 0 \\
 0 & 0 & h_{2 3} \\
 0 & h_{2 3} & h_{3 3} \\
}
$ & $\begin{array}{c} {\rm transitive}\\ \det g = 0\end{array}$ \smallskip\\

$(A10)$ & $
\pmatrx{
 0 & 0 & 0 \\
 0 & 0 & 0 \\
 0 & e_2 & e_1 \\
}
$ & $
\pmatrx{
 0 & 0 & g_{1 3} \\
 0 & 0 & g_{2 3} \\
 g_{1 3} & g_{2 3} & g_{3 3} \\
}
$ & $
\pmatrx{
 0 & 0 & 0 \\
 0 & 0 & 0 \\
 0 & 0 & 0 \\
}
$ & $
\pmatrx{
 0 & 0 & h_{1 3} \\
 0 & 0 & 0 \\
 h_{1 3} & 0 & h_{3 3} \\
}
$ & $\begin{array}{c} {\rm transitive}\\ \det g = 0\end{array}$ \smallskip\\

$(A11)$ & $
\pmatrx{
 0 & 0 & 0 \\
 0 & 0 & 0 \\
 e_1 & \kappa  e_2 & 0 \\
}
$ & $
\pmatrx{
 0 & 0 & g_{1 3} \\
 0 & 0 & g_{2 3} \\
 g_{1 3} & g_{2 3} & g_{3 3} \\
}
$ & $
\pmatrx{
 0 & 0 & 0 \\
 0 & 0 & 0 \\
 0 & 0 & 0 \\
}
$ & $
\pmatrx{
 0 & 0 & 0 \\
 0 & 0 & 0 \\
 0 & 0 & h_{3 3} \\
}
$ & $\begin{array}{c}\abs{\kappa}\les 1,\ \kappa \neq 0\\ {\rm transitive}\\ \det g = 0\end{array}$ \smallskip\\

$(A12)$ & $
\pmatrx{
 0 & 0 & 0 \\
 0 & 0 & 0 \\
 e_1 & e_1+e_2 & 0 \\
}
$ & $
\pmatrx{
 0 & 0 & g_{1 3} \\
 0 & 0 & g_{2 3} \\
 g_{1 3} & g_{2 3} & g_{3 3} \\
}
$ & $
\pmatrx{
 0 & 0 & 0 \\
 0 & 0 & 0 \\
 0 & 0 & 0 \\
}
$ & $
\pmatrx{
 0 & 0 & 0 \\
 0 & 0 & 0 \\
 0 & 0 & h_{3 3} \\
}
$ & $\begin{array}{c} {\rm transitive}\\ \det g = 0\end{array}$\smallskip\\

$(A13)$ & $
\pmatrx{
 0 & 0 & 0 \\
 0 & e_1 & 0 \\
 e_1 & \frac{e_2}{2} & 0 \\
}
$ & $
\pmatrx{
 0 & 0 & \frac{1}{2}g_{2 2} \\
 0 & g_{2 2} & g_{2 3} \\
 \frac{1}{2}g_{2 2} & g_{2 3} & g_{3 3} \\
}
$ & $
\pmatrx{
 0 & 0 & 0 \\
 0 & 0 & 0 \\
 0 & 0 & 0 \\
}
$ & $
\pmatrx{
 0 & 0 & 0 \\
 0 & 0 & 0 \\
 0 & 0 & h_{3 3} \\
}
$ & transitive\smallskip\\\hline

\end{tabular}
\end{table}
}

\newpage

{\footnotesize
\begin{table}[h!]
\caption{Classification of bilinear forms associated with three-dimensional Novikov algebras of type $C$.}\label{tab3}
\begin{tabular}{ c  c  c  c  c  c }
\hline type & charact. matrix & $g$ & $f$ & $h$ & comments \\\hline\\[-10pt]

$(C6)$ & $
\pmatrx{
 0 & 0 & e_1 \\
 0 & 0 & e_2 \\
 e_1 & 0 & e_3 \\
}
$ & $
\pmatrx{
 0 & 0 & g_{1 3} \\
 0 & g_{2 2} & g_{2 3} \\
 g_{1 3} & g_{2 3} & g_{3 3} \\
}
$ & $
\pmatrx{
 0 & 0 & 0 \\
 0 & 0 & f_{2 3} \\
 0 & -f_{2 3} & 0 \\
}
$ & $
\pmatrx{
 0 & 0 & h_{1 3} \\
 0 & 0 & 0 \\
 h_{1 3} & 0 & h_{3 3} \\
}
$ & \smallskip\\

$(C7)$ & $
\pmatrx{
 0 & 0 & e_1 \\
 0 & 0 & e_2 \\
 e_1 & 0 & e_2+e_3 \\
}
$ & $
\pmatrx{
 0 & 0 & g_{1 3} \\
 0 & 0 & g_{2 3} \\
 g_{1 3} & g_{2 3} & g_{3 3} \\
}
$ & $
\pmatrx{
 0 & 0 & 0 \\
 0 & 0 & 0 \\
 0 & 0 & 0 \\
}
$ & $
\pmatrx{
 0 & 0 & h_{1 3} \\
 0 & 0 & 0 \\
 h_{1 3} & 0 & h_{3 3} \\
}
$ & $\det g = 0$ \smallskip\\

$(C8)$ & $
\pmatrx{
 0 & 0 & e_1 \\
 0 & 0 & e_2 \\
 0 & 0 & e_3 \\
}
$ & $
\pmatrx{
 g_{1 1} & g_{1 2} & g_{1 3} \\
 g_{1 2} & g_{2 2} & g_{2 3} \\
 g_{1 3} & g_{2 3} & g_{3 3} \\
}
$ & $
\pmatrx{
 0 & 0 & f_{1 3} \\
 0 & 0 & f_{2 3} \\
 -f_{1 3} & -f_{2 3} & 0 \\
}
$ & $
\pmatrx{
 0 & 0 & 0 \\
 0 & 0 & 0 \\
 0 & 0 & h_{3 3} \\
}
$ & \smallskip\\

$(C9)$ & $
\pmatrx{
 0 & 0 & e_1 \\
 0 & 0 & e_2 \\
 \kappa  e_1 & 0 & e_3 \\
}
$ & $
\pmatrx{
 0 & 0 & g_{1 3} \\
 0 & g_{2 2} & g_{2 3} \\
 g_{1 3} & g_{2 3} & g_{3 3} \\
}
$ & $
\pmatrx{
 0 & 0 & 0 \\
 0 & 0 & f_{2 3} \\
 0 & -f_{2 3} & 0 \\
}
$ & $
\pmatrx{
 0 & 0 & 0 \\
 0 & 0 & 0 \\
 0 & 0 & h_{3 3} \\
}
$ & $\kappa \neq 0,1$ \smallskip\\

$(C10)$ & $
\pmatrx{
 0 & 0 & e_1 \\
 0 & 0 & e_2 \\
 \kappa  e_1 & 0 & e_2+e_3 \\
}
$ & $
\pmatrx{
 0 & 0 & g_{1 3} \\
 0 & 0 & g_{2 3} \\
 g_{1 3} & g_{2 3} & g_{3 3} \\
}
$ & $
\pmatrx{
 0 & 0 & 0 \\
 0 & 0 & 0 \\
 0 & 0 & 0 \\
}
$ & $
\pmatrx{
 0 & 0 & 0 \\
 0 & 0 & 0 \\
 0 & 0 & h_{3 3} \\
}
$ & $\begin{array}{c}\kappa \neq 0,1\\ \det g = 0\end{array}$ \smallskip\\

$(C10)'$ & $
\pmatrx{
 0 & 0 & e_1 \\
 0 & 0 & e_2 \\
 \kappa  e_1 & 0 & e_2+e_3 \\
}
$ & $
\pmatrx{
 g_{1 1} & 0 & g_{1 3} \\
 0 & 0 & g_{2 3} \\
 g_{1 3} & g_{2 3} & g_{3 3} \\
}
$ & $
\pmatrx{
 0 & 0 & f_{1 3} \\
 0 & 0 & 0 \\
 -f_{1 3} & 0 & 0 \\
}
$  & $
\pmatrx{
 0 & 0 & 0 \\
 0 & 0 & 0 \\
 0 & 0 & h_{3 3} \\
}
$ & $\kappa = 0$ \smallskip\\

$(C11)$ & $
\pmatrx{
 0 & 0 & e_1 \\
 0 & 0 & e_2 \\
 e_1 & e_2 & e_3 \\
}
$ & $
\pmatrx{
 0 & 0 & g_{1 3} \\
 0 & 0 & g_{2 3} \\
 g_{1 3} & g_{2 3} & g_{3 3} \\
}
$ & $
\pmatrx{
 0 & 0 & 0 \\
 0 & 0 & 0 \\
 0 & 0 & 0 \\
}
$ & $
\pmatrx{
 0 & 0 & h_{1 3} \\
 0 & 0 & h_{2 3} \\
 h_{1 3} & h_{2 3} & h_{3 3} \\
}
$ & $\det g=0$\smallskip\\

$(C12)$ & $
\pmatrx{
 0 & 0 & e_1 \\
 0 & 0 & e_2 \\
 e_1 & \kappa  e_2 & e_3 \\
}
$ & $
\pmatrx{
 0 & 0 & g_{1 3} \\
 0 & 0 & g_{2 3} \\
 g_{1 3} & g_{2 3} & g_{3 3} \\
}
$ & $
\pmatrx{
 0 & 0 & 0 \\
 0 & 0 & 0 \\
 0 & 0 & 0 \\
}
$ & $
\pmatrx{
 0 & 0 & h_{1 3} \\
 0 & 0 & 0 \\
 h_{1 3} & 0 & h_{3 3} \\
}
$ & $\begin{array}{c}\kappa \neq 0,1\\ \det g=0\end{array}$ \smallskip\\

$(C13)$ & $
\pmatrx{
 0 & 0 & e_1 \\
 0 & 0 & e_2 \\
 \gamma  e_1 & \kappa  e_2 & e_3 \\
}
$ & $
\pmatrx{
 0 & 0 & g_{1 3} \\
 0 & 0 & g_{2 3} \\
 g_{1 3} & g_{2 3} & g_{3 3} \\
}
$ & $
\pmatrx{
 0 & 0 & 0 \\
 0 & 0 & 0 \\
 0 & 0 & 0 \\
}
$ & $
\pmatrx{
 0 & 0 & 0 \\
 0 & 0 & 0 \\
 0 & 0 & h_{3 3} \\
}
$ & $\begin{array}{c}\gamma ,\kappa \neq 0,1\\ \det g=0\end{array}$ \smallskip\\

$(C14)$ & $
\pmatrx{
 0 & 0 & e_1 \\
 0 & 0 & e_2 \\
 e_1 & e_1+e_2 & e_3 \\
}
$ & $
\pmatrx{
 0 & 0 & g_{1 3} \\
 0 & 0 & g_{2 3} \\
 g_{1 3} & g_{2 3} & g_{3 3} \\
}
$ & $
\pmatrx{
 0 & 0 & 0 \\
 0 & 0 & 0 \\
 0 & 0 & 0 \\
}
$ & $
\pmatrx{
 0 & 0 & 0 \\
 0 & 0 & h_{2 3} \\
 0 & h_{2 3} & h_{3 3} \\
}
$ & $\det g=0$ \smallskip\\

$(C15)$ & $
\pmatrx{
 0 & 0 & e_1 \\
 0 & 0 & e_2 \\
 \kappa  e_1 & e_1+\kappa  e_2 & e_3 \\
}
$ & $
\pmatrx{
 0 & 0 & g_{1 3} \\
 0 & 0 & g_{2 3} \\
 g_{1 3} & g_{2 3} & g_{3 3} \\
}
$ & $
\pmatrx{
 0 & 0 & 0 \\
 0 & 0 & 0 \\
 0 & 0 & 0 \\
}
$ & $
\pmatrx{
 0 & 0 & 0 \\
 0 & 0 & 0 \\
 0 & 0 & h_{3 3} \\
}
$ & $\begin{array}{c}\kappa \neq 0,1\\ \det g=0\end{array}$ \smallskip\\

$(C16)$ & $
\pmatrx{
 0 & 0 & e_1 \\
 0 & 0 & e_2 \\
 0 & e_1 & e_3 \\
}
$ & $
\pmatrx{
 0 & 0 & g_{1 3} \\
 0 & g_{2 2} & g_{2 3} \\
 g_{1 3} & g_{2 3} & g_{3 3} \\
}
$ & $
\pmatrx{
 0 & 0 & 0 \\
 0 & 0 & f_{2 3} \\
 0 & -f_{2 3} & 0 \\
}
$ & $
\pmatrx{
 0 & 0 & 0 \\
 0 & 0 & 0 \\
 0 & 0 & h_{3 3} \\
}
$ & \smallskip\\

$(C17)$ & $
\pmatrx{
 0 & 0 & e_1 \\
 0 & 0 & e_2 \\
 0 & e_1 & e_2+e_3 \\
}
$ & $
\pmatrx{
 0 & 0 & g_{1 3} \\
 0 & 0 & g_{2 3} \\
 g_{1 3} & g_{2 3} & g_{3 3} \\
}
$ & $
\pmatrx{
 0 & 0 & 0 \\
 0 & 0 & 0 \\
 0 & 0 & 0 \\
}
$ & $
\pmatrx{
 0 & 0 & 0 \\
 0 & 0 & 0 \\
 0 & 0 & h_{3 3} \\
}
$ & $\det g=0$\smallskip\\

$(C18)$ & $
\pmatrx{
 0 & 0 & e_1+e_2 \\
 0 & 0 & e_2 \\
 0 & -e_2 & e_3 \\
}
$ & $
\pmatrx{
 g_{1 1} & 0 & g_{1 3} \\
 0 & 0 & 0 \\
 g_{1 3} & 0 & g_{3 3} \\
}
$ & $
\pmatrx{
 0 & 0 & f_{1 3} \\
 0 & 0 & 0 \\
 -f_{1 3} & 0 & 0 \\
}
$ & $
\pmatrx{
 0 & 0 & 0 \\
 0 & 0 & 0 \\
 0 & 0 & h_{3 3} \\
}
$ & $\det g=0$\smallskip\\

$(C19)$ & $
\pmatrx{
 0 & 0 & e_1+e_2 \\
 0 & 0 & e_2 \\
 0 & -e_2 & e_1+e_3 \\
}
$ & $
\pmatrx{
 g_{2 3} & 0 & g_{1 3} \\
 0 & 0 & g_{2 3} \\
 g_{1 3} & g_{2 3} & g_{3 3} \\
}
$ & $
\pmatrx{
 0 & 0 & 0 \\
 0 & 0 & 0 \\
 0 & 0 & 0 \\
}
$ & $
\pmatrx{
 0 & 0 & 0 \\
 0 & 0 & 0 \\
 0 & 0 & h_{3 3} \\
}
$ & \smallskip\\\hline

\end{tabular}
\end{table}
}

%\bigskip

\newpage

{\footnotesize
\begin{table}[h!]
\caption{Classification of bilinear forms associated with three-dimensional Novikov algebras of type $D$.}\label{tab4}
\begin{tabular}{ c  c  c  c  c  c }
\hline type & charact. matrix & $g$ & $f$ & $h$ & comments \\\hline\\[-10pt]

$(D2)$ & $
\pmatrx{
 e_2 & 0 & e_1 \\
 0 & 0 & e_2 \\
 e_1 & e_2 & e_3 \\
}
$ & $
\pmatrx{
 g_{11} & 0 & g_{1 3} \\
 0 & 0 & g_{11} \\
 g_{1 3} & g_{11} & g_{3 3} \\
}
$ & $
\pmatrx{
 0 & 0 & 0 \\
 0 & 0 & 0 \\
 0 & 0 & 0 \\
}
$ & $
\pmatrx{
 h_{11} & 0 & h_{1 3} \\
 0 & 0 & h_{11} \\
 h_{1 3} & h_{11} & h_{3 3} \\
}
$ & $\det h\neq 0$\smallskip\\

$(D3)$ & $
\pmatrx{
 e_2 & 0 & e_1 \\
 0 & 0 & e_2 \\
 e_1+e_2 & e_2 & e_3 \\
}
$ & $
\pmatrx{
 g_{2 3} & 0 & g_{1 3} \\
 0 & 0 & g_{2 3} \\
 g_{1 3} & g_{2 3} & g_{3 3} \\
}
$ & $
\pmatrx{
 0 & 0 & 0 \\
 0 & 0 & 0 \\
 0 & 0 & 0 \\
}
$ & $
\pmatrx{
 0 & 0 & h_{1 3} \\
 0 & 0 & 0 \\
 h_{1 3} & 0 & h_{3 3} \\
}
$ & \smallskip\\

$(D4)$ & $
\pmatrx{
 e_2 & 0 & e_1 \\
 0 & 0 & e_2 \\
 \frac{e_1}{2} & 0 & e_3 \\
}
$ & $
\pmatrx{
 2 g_{2 3} & 0 & g_{1 3} \\
 0 & 0 & g_{2 3} \\
 g_{1 3} & g_{2 3} & g_{3 3} \\
}
$ & $
\pmatrx{
 0 & 0 & 0 \\
 0 & 0 & 0 \\
 0 & 0 & 0 \\
}
$ & $
\pmatrx{
 0 & 0 & 0 \\
 0 & 0 & 0 \\
 0 & 0 & h_{3 3} \\
}
$ & \smallskip\\

$(D5)$ & $
\pmatrx{
 e_2 & 0 & e_1 \\
 0 & 0 & e_2 \\
 \frac{e_1}{2} & 0 & e_2+e_3 \\
}
$ & $
\pmatrx{
 2 g_{2 3} & 0 & g_{1 3} \\
 0 & 0 & g_{2 3} \\
 g_{1 3} & g_{2 3} & g_{3 3} \\
}
$ & $
\pmatrx{
 0 & 0 & 0 \\
 0 & 0 & 0 \\
 0 & 0 & 0 \\
}
$ & $
\pmatrx{
 0 & 0 & 0 \\
 0 & 0 & 0 \\
 0 & 0 & h_{3 3} \\
}
$ & \smallskip\\

$(D6)$ & $
\pmatrx{
 e_2 & 0 & e_1 \\
 0 & 0 & e_2 \\
 \kappa  e_1 & (2 \kappa -1) e_2 & e_3 \\
}
$ & $
\pmatrx{
 g_{1 1} & 0 & g_{1 3} \\
 0 & 0 & \kappa  g_{1 1} \\
 g_{1 3} & \kappa  g_{1 1} & g_{3 3} \\
}
$ & $
\pmatrx{
 0 & 0 & 0 \\
 0 & 0 & 0 \\
 0 & 0 & 0 \\
}
$ & $
\pmatrx{
 0 & 0 & 0 \\
 0 & 0 & 0 \\
 0 & 0 & h_{3 3} \\
}
$ & $\kappa \neq 0,\frac{1}{2},1$ \smallskip\\

$(D6)'$ & $
\pmatrx{
 e_2 & 0 & e_1 \\
 0 & 0 & e_2 \\
 \kappa  e_1 & (2 \kappa -1) e_2 & e_3 \\
}
$ & $
\pmatrx{
 g_{1 1} & 0 & g_{1 3} \\
 0 & 0 & 0 \\
 g_{1 3} & 0 & g_{3 3} \\
}
$ & $
\pmatrx{
 0 & 0 & f_{1 3} \\
 0 & 0 & 0 \\
 -f_{1 3} & 0 & 0 \\
}
$ & $
\pmatrx{
 0 & 0 & 0 \\
 0 & 0 & 0 \\
 0 & 0 & h_{3 3} \\
}
$ & $\begin{array}{c}\kappa=0\\ \det g=0\end{array}$\smallskip\\\hline

\end{tabular}
\end{table}
}

\bigskip\bigskip

{\footnotesize
\begin{table}[h!]
\caption{Classification of bilinear forms associated with all four-dimensional abelian nontransitive Novikov algebras and additionally the Novikov algebra $\Alg_4$.}\label{tab5}
\begin{tabular}{ c  c  c  c  c  c }
\hline type & charact. matrix & $g$ & $f$ & $h$ & comments \\\hline\\[-10pt]

$\widetilde{A}_{3,1}$ & $
\pmatrx{
 e_1 & e_2 & e_3 & e_4 \\
 e_2 & 0 & 0 & 0 \\
 e_3 & 0 & 0 & 0 \\
 e_4 & 0 & 0 & 0 \\
}
$ & $
\pmatrx{
 g_{11} & g_{12} & g_{13} & g_{14} \\
 g_{12} & 0 & 0 & 0 \\
 g_{13} & 0 & 0 & 0 \\
 g_{14} & 0 & 0 & 0 \\
}
$ & $
\pmatrx{
 0 & 0 & 0 & 0 \\
 0 & 0 & 0 & 0 \\
 0 & 0 & 0 & 0 \\
 0 & 0 & 0 & 0 \\
}
$ & $
\pmatrx{
 h_{11} & h_{12} & h_{13} & h_{14} \\
 h_{12} & 0 & 0 & 0 \\
 h_{13} & 0 & 0 & 0 \\
 h_{14} & 0 & 0 & 0 \\
}
$ & $\det g =0$\smallskip\\

$\widetilde{A}_{3,2}$ & $
\pmatrx{
 e_1 & e_2 & e_3 & e_4 \\
 e_2 & e_3 & 0 & 0 \\
 e_3 & 0 & 0 & 0 \\
 e_4 & 0 & 0 & 0 \\
}
$ & $
\pmatrx{
 g_{11} & g_{12} & g_{22} & g_{14} \\
 g_{12} & g_{22} & 0 & 0 \\
 g_{22} & 0 & 0 & 0 \\
 g_{14} & 0 & 0 & 0 \\
}
$ & $
\pmatrx{
 0 & 0 & 0 & 0 \\
 0 & 0 & 0 & 0 \\
 0 & 0 & 0 & 0 \\
 0 & 0 & 0 & 0 \\
}
$ & $
\pmatrx{
 h_{11} & h_{12} & h_{22} & h_{14} \\
 h_{12} & h_{22} & 0 & 0 \\
 h_{22} & 0 & 0 & 0 \\
 h_{14} & 0 & 0 & 0 \\
}
$ & $\det g =0$\smallskip\\

$\widetilde{A}_{3,3}$ & $
\pmatrx{
 e_1 & e_2 & e_3 & e_4 \\
 e_2 & e_3 & e_4 & 0 \\
 e_3 & e_4 & 0 & 0 \\
 e_4 & 0 & 0 & 0 \\
}
$ & $
\pmatrx{
 g_{11} & g_{12} & g_{22} & g_{23} \\
 g_{12} & g_{22} & g_{23} & 0 \\
 g_{22} & g_{23} & 0 & 0 \\
 g_{23} & 0 & 0 & 0 \\
}
$ & $
\pmatrx{
 0 & 0 & 0 & 0 \\
 0 & 0 & 0 & 0 \\
 0 & 0 & 0 & 0 \\
 0 & 0 & 0 & 0 \\
}
$ & $
\pmatrx{
 h_{11} & h_{12} & h_{22} & h_{23} \\
 h_{12} & h_{22} & h_{23} & 0 \\
 h_{22} & h_{23} & 0 & 0 \\
 h_{23} & 0 & 0 & 0 \\
}
$ & $\det h\neq 0$\smallskip\\

$\widetilde{A}_{3,4}$ & $
\pmatrx{
 e_1 & e_2 & e_3 & e_4 \\
 e_2 & e_4 & e_4 & 0 \\
 e_3 & e_4 & 0 & 0 \\
 e_4 & 0 & 0 & 0 \\
}
$ & $
\pmatrx{
 g_{11} & g_{12} & g_{13} & g_{23} \\
 g_{12} & g_{23} & g_{23} & 0 \\
 g_{13} & g_{23} & 0 & 0 \\
 g_{23} & 0 & 0 & 0 \\
}
$ & $
\pmatrx{
 0 & 0 & 0 & 0 \\
 0 & 0 & 0 & 0 \\
 0 & 0 & 0 & 0 \\
 0 & 0 & 0 & 0 \\
}
$ & $
\pmatrx{
 h_{11} & h_{12} & h_{13} & h_{23} \\
 h_{12} & h_{23} & h_{23} & 0 \\
 h_{13} & h_{23} & 0 & 0 \\
 h_{23} & 0 & 0 & 0 \\
}
$ & $\det h \neq 0$\smallskip\\

$\Alg_4$ & $
\pmatrx{
 0 & 0 & 0 & e_1 \\
 0 & 0 & 0 & e_2 \\
 0 & 0 & 0 & e_3 \\
 0 & 0 & 0 & e_4 \\
}
$ & $
\pmatrx{
 g_{11} & g_{12} & g_{13} & g_{14} \\
 g_{12} & g_{22} & g_{23} & g_{24} \\
 g_{13} & g_{23} & g_{33} & g_{34} \\
 g_{14} & g_{24} & g_{34} & g_{44} \\
}
$ & $
\pmatrx{
 0 & 0 & 0 & f_{14} \\
 0 & 0 & 0 & f_{24} \\
 0 & 0 & 0 & f_{34} \\
 -f_{14} & -f_{24} & -f_{34} & 0 \\
}
$ & $
\pmatrx{
 0 & 0 & 0 & 0 \\
 0 & 0 & 0 & 0 \\
 0 & 0 & 0 & 0 \\
 0 & 0 & 0 & h_{44} \\
}
$ & non-abelian\smallskip\\

\hline

\end{tabular}
\end{table}
}

\newpage

{%\scriptsize
\footnotesize
\begin{table}[h!]
\caption{Classification of bilinear forms associated with all four-dimensional non-transitive Novikov algebras for which the associated Lie algebra is non-abelian and nilpotent.}\label{tab6}
\begin{tabular}{ c  c  c  c  c  c }
\hline type & charact. matrix & $g$ & $f$ & $h$ & comments \\\hline\\[-10pt]

$N^{\mathfrak{h}_1}_{19}$ & $
\pmatrx{
 e_1 & e_2+e_3 & e_3 & e_4 \\
 e_2 & 0 & 0 & 0 \\
 e_3 & 0 & 0 & 0 \\
 e_4 & 0 & 0 & 0 \\
}
$ & $
\pmatrx{
 g_{11} & g_{12} & g_{13} & g_{14} \\
 g_{12} & 0 & 0 & 0 \\
 g_{13} & 0 & 0 & 0 \\
 g_{14} & 0 & 0 & 0 \\
}
$ & $
\pmatrx{
 0 & 0 & 0 & 0 \\
 0 & 0 & 0 & 0 \\
 0 & 0 & 0 & 0 \\
 0 & 0 & 0 & 0 \\
}
$ & $
\pmatrx{
 h_{11} & h_{12} & 0 & h_{14} \\
 h_{12} & 0 & 0 & 0 \\
 0 & 0 & 0 & 0 \\
 h_{14} & 0 & 0 & 0 \\
}
$ & $\det g =0$\smallskip\\

$N^{\mathfrak{h}_1}_{20}$ & $
\pmatrx{
 e_1 & e_2+e_3 & e_3 & e_4 \\
 e_2 & e_3 & 0 & 0 \\
 e_3 & 0 & 0 & 0 \\
 e_4 & 0 & 0 & 0 \\
}
$ & $
\pmatrx{
 g_{11} & g_{12} & g_{22} & g_{14} \\
 g_{12} & g_{22} & 0 & 0 \\
 g_{22} & 0 & 0 & 0 \\
 g_{14} & 0 & 0 & 0 \\
}
$ & $
\pmatrx{
 0 & 0 & 0 & 0 \\
 0 & 0 & 0 & 0 \\
 0 & 0 & 0 & 0 \\
 0 & 0 & 0 & 0 \\
}
$ & $
\pmatrx{
 h_{11} & h_{12} & 0 & h_{14} \\
 h_{12} & 0 & 0 & 0 \\
 0 & 0 & 0 & 0 \\
 h_{14} & 0 & 0 & 0 \\
}
$ & $\det g =0$\smallskip\\

$N^{\mathfrak{h}_1}_{21}$ & $
\pmatrx{
 e_1 & e_2+e_3 & e_3 & e_4 \\
 e_2 & e_4 & 0 & 0 \\
 e_3 & 0 & 0 & 0 \\
 e_4 & 0 & 0 & 0 \\
}
$ & $
\pmatrx{
 g_{11} & g_{12} & g_{13} & g_{22} \\
 g_{12} & g_{22} & 0 & 0 \\
 g_{13} & 0 & 0 & 0 \\
 g_{22} & 0 & 0 & 0 \\
}
$ & $
\pmatrx{
 0 & 0 & 0 & 0 \\
 0 & 0 & 0 & 0 \\
 0 & 0 & 0 & 0 \\
 0 & 0 & 0 & 0 \\
}
$ & $
\pmatrx{
 h_{11} & h_{12} & 0 & h_{22} \\
 h_{12} & h_{22} & 0 & 0 \\
 0 & 0 & 0 & 0 \\
 h_{22} & 0 & 0 & 0 \\
}
$ & $\det g =0$\smallskip\\

$N^{\mathfrak{h}_1}_{22}$ & $
\pmatrx{
 e_1 & e_2+e_3 & e_3 & e_4 \\
 e_2 & 0 & 0 & e_3 \\
 e_3 & 0 & 0 & 0 \\
 e_4 & e_3 & 0 & 0 \\
}
$ & $
\pmatrx{
 g_{11} & g_{12} & g_{24} & g_{14} \\
 g_{12} & 0 & 0 & g_{24} \\
 g_{24} & 0 & 0 & 0 \\
 g_{14} & g_{24} & 0 & 0 \\
}
$ & $
\pmatrx{
 0 & 0 & 0 & 0 \\
 0 & 0 & 0 & 0 \\
 0 & 0 & 0 & 0 \\
 0 & 0 & 0 & 0 \\
}
$ & $
\pmatrx{
 h_{11} & h_{12} & 0 & h_{14} \\
 h_{12} & 0 & 0 & 0 \\
 0 & 0 & 0 & 0 \\
 h_{14} & 0 & 0 & 0 \\
}
$ & \smallskip\\

$N^{\mathfrak{h}_1}_{23}$ & $
\pmatrx{
 e_1 & e_2+e_3 & e_3 & e_4 \\
 e_2 & e_4 & 0 & e_3 \\
 e_3 & 0 & 0 & 0 \\
 e_4 & e_3 & 0 & 0 \\
}
$ & $
\pmatrx{
 g_{11} & g_{12} & g_{24} & g_{22} \\
 g_{12} & g_{22} & 0 & g_{24} \\
 g_{24} & 0 & 0 & 0 \\
 g_{22} & g_{24} & 0 & 0 \\
}
$ & $
\pmatrx{
 0 & 0 & 0 & 0 \\
 0 & 0 & 0 & 0 \\
 0 & 0 & 0 & 0 \\
 0 & 0 & 0 & 0 \\
}
$ & $
\pmatrx{
 h_{11} & h_{12} & 0 & h_{22} \\
 h_{12} & h_{22} & 0 & 0 \\
 0 & 0 & 0 & 0 \\
 h_{22} & 0 & 0 & 0 \\
}
$ & \smallskip\\

$N^{\mathfrak{h}_1}_{24}$ & $
\pmatrx{
 e_1 & e_2+e_3 & e_3 & e_4 \\
 e_2 & e_3 & 0 & 0 \\
 e_3 & 0 & 0 & 0 \\
 e_4 & 0 & 0 & e_3 \\
}
$ & $
\pmatrx{
 g_{11} & g_{12} & g_{44} & g_{14} \\
 g_{12} & g_{44} & 0 & 0 \\
 g_{44} & 0 & 0 & 0 \\
 g_{14} & 0 & 0 & g_{44} \\
}
$ & $
\pmatrx{
 0 & 0 & 0 & 0 \\
 0 & 0 & 0 & 0 \\
 0 & 0 & 0 & 0 \\
 0 & 0 & 0 & 0 \\
}
$ & $
\pmatrx{
 h_{11} & h_{12} & 0 & h_{14} \\
 h_{12} & 0 & 0 & 0 \\
 0 & 0 & 0 & 0 \\
 h_{14} & 0 & 0 & 0 \\
}
$ & \smallskip\\

$N^{\mathfrak{h}_1}_{25}$ & $
\pmatrx{
 e_1 & e_2+e_3 & e_3 & e_4 \\
 e_2 & 0 & 0 & 0 \\
 e_3 & 0 & 0 & 0 \\
 e_4 & 0 & 0 & e_3 \\
}
$ & $
\pmatrx{
 g_{11} & g_{12} & g_{44} & g_{14} \\
 g_{12} & 0 & 0 & 0 \\
 g_{44} & 0 & 0 & 0 \\
 g_{14} & 0 & 0 & g_{44} \\
}
$ & $
\pmatrx{
 0 & 0 & 0 & 0 \\
 0 & 0 & 0 & 0 \\
 0 & 0 & 0 & 0 \\
 0 & 0 & 0 & 0 \\
}
$ & $
\pmatrx{
 h_{11} & h_{12} & 0 & h_{14} \\
 h_{12} & 0 & 0 & 0 \\
 0 & 0 & 0 & 0 \\
 h_{14} & 0 & 0 & 0 \\
}
$ & $\det g =0$\smallskip\\

$N^{\mathfrak{h}_2}_{15}$ & $
\pmatrx{
 e_1 & e_2+e_3 & e_3+e_4 & e_4 \\
 e_2 & 0 & 0 & 0 \\
 e_3 & 0 & 0 & 0 \\
 e_4 & 0 & 0 & 0 \\
}
$ & $
\pmatrx{
 g_{11} & g_{12} & g_{13} & g_{14} \\
 g_{12} & 0 & 0 & 0 \\
 g_{13} & 0 & 0 & 0 \\
 g_{14} & 0 & 0 & 0 \\
}
$ & $
\pmatrx{
 0 & 0 & 0 & 0 \\
 0 & 0 & 0 & 0 \\
 0 & 0 & 0 & 0 \\
 0 & 0 & 0 & 0 \\
}
$ & $
\pmatrx{
 h_{11} & h_{12} & 0 & 0 \\
 h_{12} & 0 & 0 & 0 \\
 0 & 0 & 0 & 0 \\
 0 & 0 & 0 & 0 \\
}
$ & $\det g =0$\smallskip\\

$N^{\mathfrak{h}_2}_{16}$ & $
\pmatrx{
 e_1 & e_2+e_3 & e_3+e_4 & e_4 \\
 e_2 & e_4 & 0 & 0 \\
 e_3 & 0 & 0 & 0 \\
 e_4 & 0 & 0 & 0 \\
}
$ & $
\pmatrx{
 g_{11} & g_{12} & g_{13} & g_{22} \\
 g_{12} & g_{22} & 0 & 0 \\
 g_{13} & 0 & 0 & 0 \\
 g_{22} & 0 & 0 & 0 \\
}
$ & $
\pmatrx{
 0 & 0 & 0 & 0 \\
 0 & 0 & 0 & 0 \\
 0 & 0 & 0 & 0 \\
 0 & 0 & 0 & 0 \\
}
$ & $
\pmatrx{
 h_{11} & h_{12} & 0 & 0 \\
 h_{12} & 0 & 0 & 0 \\
 0 & 0 & 0 & 0 \\
 0 & 0 & 0 & 0 \\
}
$ & $\det g =0$\smallskip\\

$N^{\mathfrak{h}_2}_{17}(\kappa)$ & $
\pmatrx{
e_ 1 & e_ 2+e_ 3 & e_ 3+e_ 4 & e_ 4 \\
 e_ 2 & 2 e_ 3+\kappa  e_ 4 & e_ 4 & 0 \\
 e_ 3 & e_ 4 & 0 & 0 \\
 e_ 4 & 0 & 0 & 0 \\
}
$ & $
\pmatrx{
 g_{11} & g_{12} & g_{13} & g_{23} \\
 g_{12} & g_{22} & g_{23} & 0 \\
 g_{13} & g_{23} & 0 & 0 \\
 g_{23} & 0 & 0 & 0 \\
}
$ & $
\pmatrx{
 0 & 0 & 0 & 0 \\
 0 & 0 & 0 & 0 \\
 0 & 0 & 0 & 0 \\
 0 & 0 & 0 & 0 \\
}
$ & $
\pmatrx{
h_{11} & h_{12} & 0 & 0 \\
 h_{12} & 0 & 0 & 0 \\
 0 & 0 & 0 & 0 \\
 0 & 0 & 0 & 0 \\
}
$ & $\begin{array}{l}g_{13}=\\\frac{1}{2}g_{22}+\frac{1-\kappa}{2} g_{23}\end{array}$\smallskip\\

\hline
\end{tabular}
\end{table}
}

\newpage

%\footnotesize

\bigskip\bigskip

\end{document}